\documentclass[11pt,a4paper]{article}
\usepackage[margin=1in]{geometry}
\usepackage{tikz}
\usetikzlibrary{positioning,arrows.meta}
\usepackage{subcaption}

\usepackage[english]{babel}
\usepackage{algpseudocode}
\usepackage{makecell}
\usepackage[english]{babel}
\usepackage{array} 
\usepackage{booktabs} 
\usepackage{authblk} 

\usepackage[ruled,linesnumbered,vlined]{algorithm2e}
\usepackage{algpseudocode}
\usepackage{amssymb}
\usepackage{mathtools}

\usepackage{amsmath}
\usepackage{amsthm}
\usepackage{graphicx}
\usepackage[colorlinks=true, allcolors=blue]{hyperref}
\usepackage{natbib}
\usepackage{framed}
\definecolor{shadecolor}{gray}{0.9}
\usepackage{pgfplots}
\pgfplotsset{compat=1.17}
\usepackage{tikz}
 \usetikzlibrary{patterns}
\usepackage{pgfplotstable}
    \pgfplotsset{
    name nodes near coords/.style={
        every node near coord/.append style={
            name=#1-\coordindex,
            alias=#1-last,
        },
    },
    name nodes near coords/.default=coordnode
    }

\usepackage{multirow}

\usepackage[table]{xcolor} 

\newtheorem{theorem}{Theorem}[section]
\newtheorem{definition}[theorem]{Definition}

\newtheorem{lemma}[theorem]{Lemma}
\newtheorem{example}[theorem]{Example}

\newtheorem{proposition}[theorem]{Proposition}

\newcommand{\bX}{\mathbf{X}}
\newcommand{\bB}{\mathbf{B}}
\newcommand{\bP}{\mathbf{P}}

\newcommand{\bc}{\mathbf{c}}

\newcommand{\bw}{\mathbf{w}}

\DeclareMathOperator{\Rank}{Rank}
\DeclareMathOperator{\rk}{rk}

\definecolor{ruleblue}{RGB}{70,120,200}
\definecolor{shadegray}{gray}{0.9}

\makeatletter
\renewcommand{\and}{\end{tabular}\hspace{3em}\begin{tabular}[t]{c}}
\makeatother
\begin{document}
\title{Revisiting EFX and fPO Allocations for Binary Chores}

\author[1]{Zehan Lin}
\author[1]{Xiaowei Wu}
\author[2]{Shengwei Zhou}
\affil[1]{University of Macau, \texttt{\{yc47490,xiaoweiwu\}@um.edu.mo}}
\affil[2]{Nanyang Technological University, \texttt{s.arthur.zhou@gmail.com}}
\date{}

\maketitle

\begin{abstract}
We study fair and efficient allocations of indivisible chores under binary additive costs, with a focus on unequal entitlements and best-of-both-worlds guarantees.
While binary valuations have been extensively studied for goods, their chore counterpart remains much less understood.
A notable exception is due to Tao et al. (TCS 2025), who established that EFX and fPO are compatible for binary chores with equal entitlements.
We revisit binary chores under unequal entitlements and establish the computation of allocations that satisfy WEFX and fPO.
Our key technical contribution is a new lexicographic potential function tailored to chores: any allocation minimizing this potential ensures both WEFX and fPO.
This potential further guides a polynomial-time local-search algorithm for computing allocations with such fairness and efficiency guarantees.
For best-of-both-worlds guarantees, we show that unequal entitlements create a sharp obstruction: even with two agents and identical binary additive costs, no lottery can simultaneously satisfy ex-ante WEF and ex-post WEF1.
In contrast, for equal entitlements, we construct in polynomial time a lottery that is ex-ante EF and fPO, while every realization is both EFX and fPO.
%
\end{abstract}

\maketitle

\section{Introduction}
Fair division, introduced by Steinhaus~\cite{steinhaus1948problem}, is a classic problem in economics and computer science that concerns the allocation of a set $M$ of $m$ items among a set $N$ of $n$ agents with heterogeneous preferences.
When the valuations are positive, the items are viewed as goods (e.g., resources), whereas negative valuations indicate that the items are chores (e.g., tasks).
The objective of fair allocation is to compute an allocation that ensures fairness for agents.
In this paper, we focus on the fair allocation of chores, where each agent $i$ evaluates items via a non-negative and additive cost function $c_i$.

Several notions have been proposed to measure the fairness of allocations.
Among them, envy-freeness (EF)~\cite{foley1966resource} is one of the most well-studied, which requires every agent to weakly prefer her own bundle to every other agent's bundle. 
Unfortunately, EF allocations may fail to exist when items are indivisible.
This motivates relaxations such as envy-freeness up to one item (EF1)~\cite{conf/sigecom/LiptonMMS04} and envy-freeness up to any item (EFX)~\citep{journals/teco/CaragiannisKMPS19}.
Specifically, EF1 requires that any envy between two agents can be eliminated by removing some item from the envious agent’s bundle, whereas EFX requires that envy can be eliminated by removing any item from the envious agent’s bundle.
For chore allocation, the gap between EF1 and EFX is particularly striking: while Bhaskar et al.~\cite{conf/approx/BhaskarSV21} showed that EF1 allocations can be computed in polynomial time, whether EFX allocations always exist had remained a central and technically challenging open problem in fair division until very recently.
A recent breakthrough by He and Tao~\cite{journals/corr/abs-2606-08872} established that EFX allocations may fail to exist even with only $n=4$ agents and three distinct item values, thereby ruling out a general existence guarantee.
Beyond the two-agent case, for which existence follows from the divide-and-choose mechanism, EFX existence has so far been established only in a limited number of highly structured settings, including binary costs~\cite{journals/tcs/TaoWYZ25} and lexicographic preferences~\cite{conf/atal/HosseiniSVX23}.
The existence of EFX allocations remains open for general additive costs even for $n=3$ agents; whereas for bi-valued instances, existence for an arbitrary number of agents remains unresolved, with positive results known only for $n=3$~\cite{journals/ai/ZhouW24, conf/ijcai/GargMQ23} and $n=4$~\cite{journals/corr/abs-2606-08872}.

In this paper, we focus on fair and efficient allocations of binary additive chores, where the cost of each chore is either $0$ or $1$.
Binary valuations admit a natural interpretation as approval votes, a classical model in the voting literature~\cite{books/daglib/0017739,Kilgour10}, and permit simple preference elicitation.
They have therefore received extensive attention in fair division, particularly for the allocation of goods~\cite{conf/ijcai/AleksandrovAGW15,journals/aamas/BouveretL16,conf/aaai/BarmanBMN18,conf/ijcai/FreemanSVX19,conf/aaai/WangW26a}.
Throughout this paper, we use fractional Pareto optimality (fPO) as our efficiency criterion.
An allocation is fPO if no alternative fractional allocation weakly reduces every agent's cost and strictly reduces at least one agent's cost.
Tao et al.~\cite{journals/tcs/TaoWYZ25} established that EFX and fPO are compatible for binary chores.
Building on their result, we revisit the problem with general weights.

\paragraph{Weighted Setting.}
While the traditional fair allocation problem focuses on the case where agents have equal entitlements, real-world settings often involve agents with different levels of responsibility.
The weighted model captures such asymmetry by assigning each agent $i\in N$ a positive weight $w_i$, with $\sum_{i\in N}w_i=1$, representing her entitlement or share of the total obligation; the classical unweighted setting is recovered when $w_i=1/n$ for every agent.
Chakraborty et al.~\cite{journals/teco/ChakrabortyISZ21} introduced \emph{weighted envy-freeness up to one item} (WEF1) for indivisible goods and provided a polynomial-time algorithm for computing such allocations.
Analogous guarantees were later obtained for chores~\cite{conf/aaai/SpringerHY24,journals/ai/WuZZ25}.
However, Springer et al.~\cite{conf/aaai/SpringerHY24} showed that allocations satisfying the stronger notion of \emph{weighted envy-freeness up to any item} (WEFX) may fail to exist for either goods or chores, even with two agents.
That said, the structural restrictions under which WEFX allocations exist remain largely unexplored.
Positive results remain limited to goods: Neoh and Teh~\cite{conf/aaai/NeohT25} showed that a WEFX and fPO allocation can be computed in polynomial time under binary additive valuations, and Liu and Zhang~\cite{journals/corr/abs-2604-08345} extended this result to bi-valued instances.
This motivates our first question on the boundary between the positive and negative cases for chores.

\begin{center}\begin{minipage}{0.95\linewidth}
\noindent\textbf{Question 1.}
\textit{Under what conditions do WEFX allocations for chores exist, and can we characterize the boundary between existence and non-existence?}
\end{minipage}
\end{center}

\paragraph{Best-of-Both-Worlds.}
Although most prior work focuses on deterministic allocations, randomization can offer stronger fairness guarantees in expectation. 
This motivates the study of best-of-both-worlds (BoBW) fairness, which aims to provide guarantees both in expectation (ex-ante) and for every realized allocation (ex-post).
For indivisible goods, Aziz et al.~\cite{journals/ior/AzizFSV24} constructed lotteries that are ex-ante EF and ex-post EF1.
Weighted analogues combining ex-ante WEF with ex-post WEF$(1,1)$ were further obtained for goods~\cite{conf/atal/0001GM23,journals/jair/HoeferSV24} and chores~\cite{journals/ai/WuZZ25}.
Several recent works strengthened the ex-post guarantee to EFX in structured goods settings.
Bu et al.~\cite{conf/wine/BuLLLT24} and Babaioff and Frosh~\cite{journals/corr/abs-2602-14668} obtained ex-ante EF and ex-post EFX for two agents; Bu et al. further extended this guarantee to any number of agents with bi-valued utilities.
Kavitha et al.~\cite{journals/corr/abs-2507-16209} investigated the ex-post EFX guarantee under lexicographic preferences.
These results leave open whether structured chore instances can support the substantially stronger ex-post guarantee of EFX, while retaining exact fairness and efficiency ex-ante.
Binary chores provide a natural test case since EFX and fPO are compatible deterministically~\cite{journals/tcs/TaoWYZ25}.

\begin{center}
    \begin{minipage}{0.95\linewidth}
    \noindent\textbf{Question 2.}
    \textit{For binary chores, can a lottery be fair and efficient ex-ante while every realized allocation is EFX, and how does the answer change under unequal entitlements?}
    \end{minipage}
\end{center}

\subsection{Our Contribution}
In this work, we answer Questions~1 and~2 by studying fair and efficient allocations of binary additive chores, with a focus on unequal entitlements and randomized allocations.
We summarize our main contributions as follows.

\paragraph{WEFX and fPO Allocations.}
We first consider two natural existing approaches: the WEF1 algorithm for chores proposed by Wu et al.~\cite{journals/ai/WuZZ25} and the weighted leximin$^{++}$ framework of Neoh and Teh~\cite{conf/aaai/NeohT25}, which computes WEFX allocations for binary goods. 
Unfortunately, straightforward extensions of both methods fail in our setting. 
Nevertheless, analyzing the underlying reasons for their failure provides some intuition that motivates a novel lexicographic potential function. 
We prove that any allocation minimizing this potential function is both WEFX and fPO. 
Furthermore, its structural properties enable a polynomial-time local search algorithm that resolves WEFX violations while maintaining minimum social cost.
Together, these arguments yield the following result.

\medskip
\begin{center}\begin{minipage}{0.97\linewidth}
\begin{shaded}
\noindent
{\bf Result 1} (Theorem~\ref{theorem: WEFX_PO}){\bf .}
{\em For the allocation of binary additive chores to weighted agents, there exists a polynomial-time
algorithm that computes WEFX and fPO allocations.}
\end{shaded}
\end{minipage}
\end{center}
\medskip

While Springer et al.~\cite{conf/aaai/SpringerHY24} proved that WEFX allocations may fail to exist even for two agents with general additive costs, our result establishes that binary additive costs guarantee both fairness and efficiency for an arbitrary number of agents and highlights this setting as a positive frontier for WEFX.
Moreover, we demonstrate that this existence guarantee cannot be extended along two natural dimensions: relaxing additivity to cancelable binary costs, or broadening the binary domain to bi-valued and restricted additive costs. 
These counterexamples complement our result and fully characterize the boundary between existence and non-existence.

\paragraph{Ex-Ante EF + fPO, and Ex-Post EFX Lotteries.}
To address Question~2, we first show that unequal entitlements present a sharp obstruction: even for two agents with identical binary additive cost functions, no lottery can simultaneously guarantee ex-ante WEF and ex-post WEF1.
We therefore turn to the unweighted setting.
A common approach in the BoBW literature is to begin with an ex-ante fair fractional allocation and decompose it into a lottery satisfying the desired ex-post guarantee~\cite{conf/atal/0001GM23,journals/jair/HoeferSV24,journals/ior/AzizFSV24}.
In contrast, building on Tao et al.~\cite{journals/tcs/TaoWYZ25}, we begin with a deterministic EFX and fPO allocation.
We then design a random reallocation that preserves both properties in every realization.
The main challenge is to choose the reallocation probabilities so that the resulting lottery also satisfies ex-ante EF.
We formulate the admissible reallocations as perfect matchings in a bipartite graph and use a charging argument to establish the feasibility of the ex-ante fairness constraints.
A Birkhoff--von Neumann decomposition then yields the desired lottery in polynomial time.

\medskip
\begin{center}\begin{minipage}{0.97\linewidth}
\begin{shaded}
\noindent
{\bf Result 2} (Theorem~\ref{theorem: Bobw}){\bf .}
{\em For the allocation of binary additive chores to unweighted agents, there exists a polynomial-time
algorithm that computes a lottery that satisfies ex-ante EF, fPO and ex-post EFX, fPO.}
\end{shaded}
\end{minipage}
\end{center}
\medskip

Our result complements the existing literature on BoBW fairness guarantees. 
To the best of our knowledge, few results guarantee ex-post EFX while simultaneously providing non-trivial ex-ante fairness. 
Notable exceptions exist primarily in the goods setting, such as ex-ante EF for bi-valued instances studied by Bu et al.~\cite{conf/wine/BuLLLT24} and ex-ante $9/10$-EF for lexicographic preferences by Kavitha et al.~\cite{journals/corr/abs-2507-16209}.
Finally, our result can be extended to binary cancelable costs to achieve ex-ante EF together with ex-post EFX, where the absence of the efficiency guarantee follows from the known incompatibility between EFX and fPO in this domain~\cite{journals/tcs/TaoWYZ25}.

\subsection{Related Work}

Given the extensive literature on fair division, we focus on work most closely related to binary preferences and the existence of EFX allocations.
For a broader overview, we refer the reader to the surveys~\cite{journals/sigecom/AzizLMW22,journals/ai/AmanatidisABFLMVW23,journals/ipl/Suksompong25}.

\paragraph{Binary Preferences.}
Binary preferences have received substantial attention in fair division~\cite{conf/wine/0002PP020,conf/aaai/BabaioffEF21,journals/mss/SuksompongT22,conf/faw/BuSY23,conf/sagt/BrandlST26}.
Several works have studied welfare rules in this domain, with particular emphasis on the maximum Nash welfare (MNW) rule~\cite{journals/eor/DarmannS15,conf/atal/BarmanKV18,conf/ijcai/0001R20,conf/wine/0002PP020,conf/aaai/BabaioffEF21,conf/sagt/BrandlST26}.
Taken together, these results establish that a suitably tie-broken MNW rule is group-strategyproof and polynomial-time computable, and always returns an EFX and fPO allocation.
For agents with unequal entitlements, Suksompong and Teh~\cite{journals/mss/SuksompongT22} showed that a maximum weighted Nash welfare allocation can be computed in polynomial time under binary valuations.
More recently, Neoh and Teh~\cite{conf/aaai/NeohT25} established the existence of WEFX allocations for binary additive goods.
For binary additive chores, Tao et al.~\cite{journals/tcs/TaoWYZ25} established the existence of EFX and fPO allocations.
Beyond additivity, work remains limited, with two studies focusing on binary supermodular costs~\cite{conf/atal/BarmanNV23,journals/corr/abs-2303-06212}.

\paragraph{EFX Allocations for Chores.}
The existence of EFX allocations was a central open problem for chore allocation until He and Tao~\cite{journals/corr/abs-2606-08872} recently showed that such allocations need not exist under additive costs.
Nevertheless, positive results are known for several important special cases.
Beyond binary costs~\cite{journals/tcs/TaoWYZ25} and lexicographic preferences~\cite{conf/atal/HosseiniSVX23} discussed above, EFX allocations exist for two types of chores~\cite{conf/atal/0001LRS23}, leveled preferences~\cite{journals/teco/GafniHLT23}, and identical ordering preferences~\cite{journals/ai/AzizLMWZ24}.
Garg et al.~\cite{conf/stoc/GargMQ25} and Kobayashi et al.~\cite{journals/tcs/KobayashiMS25} established the existence of EFX allocations when $m \leq 2n$.
Regarding specific cost values, Lin et al.~\cite{conf/ijcai/Lin0025} showed that EFX and fPO allocations can be computed when the cost of each chore is either $1$ or $2$.
Furthermore, Lin et al.~\cite{DBLP:conf/www/LinWZ26} established the existence of EFX allocations for instances where the cost of a chore is either an inherent cost or $0$.


\section{Preliminary}\label{section: preliminary}
We study the problem of fairly allocating a set of $m$ indivisible items (chores), denoted by $M$, to a group of $n$ agents $N$, where each agent $i\in N$ has a weight $w_i > 0$ and $\sum_{i\in N} w_i = 1$.
When $w_i = 1/n$ for all $i\in N$, we call the instance unweighted.
We refer to a subset of items $X \subseteq M$ as a bundle.
Each agent $i \in N$ is associated with a cost function $c_i: 2^M \to \mathbb{R}^+ \cup \{0\}$, which assigns a cost to every bundle of items.
We assume that every cost function is normalized, i.e., $c_i(\emptyset)=0$.
We use $\bc = (c_1, \dots, c_n)$ to denote the cost functions of all agents.
For any $M' \subseteq M$, a partition $\bB = (B_1, \dots, B_n)$ is a collection of disjoint subsets of $M'$ such that $B_i \cap B_j = \emptyset$ for all $i \neq j$ and $\bigcup_{i \in N} B_i = M'$.
An allocation $\bX = (X_1, \dots, X_n)$ is an ordered partition of the item set $M$ into $n$ disjoint bundles, where agent $i$ receives bundle $X_i$.
For any integer $t \geq 1$, we use $[t]$ to denote the set $\{1, \dots, t\}$.
Given an instance $I = (N, M, \bw, \bc)$, our goal is to compute an allocation $\bX$ that is both fair and efficient.
A cost function $c_i$ is additive if $c_i(S) = \sum_{e \in S} c_i(e)$ for any $S \subseteq M$. 
In this paper, we also consider a broader class of cost functions known as cancelable functions, which were introduced by Berger et al.~\cite{conf/aaai/BergerCFF22}.
Formally, a cost function $c_i$ is cancelable if for any two bundles $S, T \subseteq M$ and any item $e \in M \setminus (S \cup T)$, we have
$$
c_i(S \cup \{e\}) > c_i(T \cup \{e\}) \implies c_i(S) > c_i(T).
$$

In this work, we focus on instances where each agent has a binary cost function, which is formally defined as follows.
\begin{definition}[Binary Instances]
    A set $\{c_1, c_2, \dots, c_n\}$ of cost functions is binary if, for every $i\in N$, $S\subseteq M$, and $e\in M\setminus S$, we have $c_i(S\cup\{e\})-c_i(S)\in\{0,1\}$.
\end{definition}

Unless specified otherwise, we assume all binary cost functions to be additive throughout this paper.
For additive cost functions, the binary condition is equivalent to $c_i(e)\in\{0,1\}$ for every item $e$.
Under binary cost functions, we further partition the set of chores as
$$M^+ = \{e \in M : c_i(e) = 1 \text{ for all } i \in N\} \quad \text{and} \quad M^0 = M \setminus M^+.
$$

In the following, we introduce several fairness notions.
	
\begin{definition}[WEF]
    An allocation $\bX$ is weighted envy-free (WEF) if for any $i, j\in N$, 
    \begin{equation*}
        \frac{c_i(X_i)}{w_i} \leq \frac{c_i(X_j)}{w_j}.
    \end{equation*}
\end{definition}
	
Note that when the instance is unweighted, the notion of WEF coincides with the EF notion.
Hence, WEF allocations are not guaranteed to exist.
To address this, we next introduce two relaxed fairness notions derived from WEF.
	
\begin{definition}[WEF1]
    An allocation $\bX$ is weighted envy-free up to one item (WEF1) if for any agents $i, j\in N$, either $X_i = \emptyset$, or there exists an item $e\in X_i$ such that
    \begin{equation*}
        \frac{c_i(X_i \setminus \{e\})}{w_i} \leq \frac{c_i(X_j)}{w_j}.
    \end{equation*}
\end{definition}

\begin{definition}[WEFX]
    An allocation $\bX$ is weighted envy-free up to any item (WEFX) if for any agents $i, j\in N$, either $X_i = \emptyset$, or for every item $e\in X_i$, we have
    \begin{equation*}
        \frac{c_i(X_i \setminus \{e\})}{w_i} \leq \frac{c_i(X_j)}{w_j}.
    \end{equation*}
\end{definition}
For convenience of
notation, we use $\hat{c}_i(X_i)$ to denote the cost for agent $i$ after moving the item with minimum cost, i.e., 
$$
\hat{c}_i(X_i) = \max_{e \in X_i} c_i(X_i \setminus \{e\}),
$$
with the convention $\hat{c}_i(\emptyset)=0$.
Next, we formally define the efficiency notions considered throughout this paper.

\begin{definition}
     The social cost of an allocation $\textbf{X}$ is defined as 
     $sc(\textbf{X}) = \sum_{i \in N}c_i(X_i)$.
\end{definition}

\begin{definition}[PO and fPO]
    An allocation $\bX'$ \emph{Pareto dominates} another allocation $\bX$ if $c_i(X_i') \leq c_i(X_i)$ for all $i \in N$, with strict inequality holds for at least one agent. 
    An integral allocation $\bX$ is \emph{Pareto optimal} (PO) if it is not Pareto dominated by any other integral allocation. 
    Furthermore, $\bX$ is \emph{fractional Pareto optimal} (fPO) if it is not Pareto dominated by any other fractional allocation\footnote{In a fractional allocation, items can be fractionally assigned. Formally, it is represented by variables $x_{ij} \in [0, 1]$ denoting the fraction of item $j$ allocated to agent $i$, such that $\sum_{i \in N} x_{ij} = 1$ for every item $j$. The cost incurred by an agent is extended linearly, i.e., $c_i(X_i) = \sum_{j} x_{ij} c_{ij}$.}.
\end{definition}

\section{WEFX Allocations for Binary Costs}
In this section, we study the computation of WEFX allocations for binary additive costs.
We introduce a novel lexicographic potential function in Section~\ref{subsection: potential} and show that minimizing this potential function guarantees both WEFX and fPO allocations.
Guided by the structure of the potential function, we design a polynomial-time algorithm in Section~\ref{subsection: WEFXPO} that efficiently computes a WEFX and PO allocation. 
Finally, Section~\ref{subsection: extension} explores the boundaries of WEFX existence by considering more general settings, including cancelable binary, bi-valued, and restricted additive cost functions.
To complement our result, we also discuss existing algorithms and explain why they fail to directly extend to our setting in Appendix~\ref{appendix:existing-algorithms}.
\paragraph{High-Level Idea.}
To establish the existence of a WEFX allocation for chores, a natural approach is to design a potential function and show that minimizing it directly implies a WEFX allocation, as Neoh and Teh~\cite{conf/aaai/NeohT25} demonstrated in the goods setting.
However, directly following their approach by lexicographically minimizing the normalized cost $c_i(X_i)/w_i$ is not sufficient to guarantee WEFX for chores (see Appendix~\ref{appendix:existing-algorithms} for details).
This discrepancy stems from the fundamental difference between the goods and chores settings.
For each agent $i \in N$, the cost $\hat{c}_i(X_i)$ is used for comparison in the chores setting, whereas the direct valuation $v_i(X_i)$ is used for goods.
Hence, a more reasonable approach is to lexicographically minimize the value of $\hat{c}_i(X_i)/w_i$ for $i \in N$, as it naturally captures whether an agent receives any zero-cost items.

\subsection{A Lexicographic Potential Function} \label{subsection: potential}

In this subsection, we define the potential function and demonstrate that any allocation minimizing it is both WEFX and fPO.
We first introduce some necessary notation. Based on the composition of their bundles, we partition the agents into two groups:
$$ 
N_0 = \{i \in N : c_i(X_i) \neq |X_i| \} \quad \text{and} \quad N_1 = N \setminus N_0. 
$$
Note that an agent's membership in $N_0$ or $N_1$ may change dynamically as items are transferred between agents. 
We define the primary potential $\Phi(\bX)$ as the vector of $(\hat{c}_i(X_i) / w_i)_{i \in N}$ sorted in non-increasing order, complemented by the tie-breaking potential $\Psi(\bX)$:
$$
\Phi(\bX)=\operatorname{sort}^{\downarrow}\left(\left(\frac{\hat{c}_i(X_i)}{w_i}\right)_{i\in N}\right),
\qquad
\Psi(\bX) = \sum_{i \in N_0} \frac{c_i(X_i)}{w_i}.
$$
The coordinates of $\Phi(\bX)$ are compared lexicographically.
Ultimately, we define the composite potential function $\mathcal{P}(\bX)$ as the following tuple, which is also evaluated lexicographically:
$$\mathcal{P}(\bX) = \left( sc(\bX), \Phi(\bX), -\Psi(\bX) \right).
$$

Recall that $M^+$ contains the items that have a cost of $1$ for every agent.
Therefore, any allocation has a social cost of at least $|M^+|$.
This implies the following lemma immediately.

\begin{lemma} \label{lemma: fpo}
    If an allocation $\bX$ satisfies $sc(\bX) = |M^+|$, then $\bX$ is fPO.
\end{lemma}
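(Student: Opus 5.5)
The plan is a contradiction argument. Suppose some fractional allocation $\mathbf{x}=(x_{ie})$ Pareto dominates $\bX$. We will show that every agent's fractional cost is at least the total mass of $M^+$ she receives. Summing these bounds gives total cost at least $|M^+|$. On the other hand, Pareto domination forces the total cost to be strictly less than $sc(\bX)=|M^+|$. These two facts cannot both hold.

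First, we get a lower bound on the social cost of an arbitrary fractional allocation. For any fractional allocation $\mathbf{x}$, additivity and linear extension give
\[
\sum_{i\in N} c_i(\mathbf{x}_i)=\sum_{e\in M}\sum_{i\in N} x_{ie}c_i(e)\ \ge\ \sum_{e\in M^+}\sum_{i\in N}x_{ie}\cdot 1=|M^+|.
\]
The inequality holds because every item in $M^+$ costs $1$ to every agent, so its contribution is $\sum_i x_{ie}=1$. Items in $M^0$ contribute a nonnegative amount.

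Second, we use domination to get the opposite bound. If $\mathbf{x}$ Pareto dominates $\bX$, then $c_i(\mathbf{x}_i)\le c_i(X_i)$ for all $i\in N$, with strict inequality for at least one agent. Summing over agents gives $\sum_i c_i(\mathbf{x}_i)<\sum_i c_i(X_i)=sc(\bX)=|M^+|$. This contradicts the first bound. Hence no fractional allocation Pareto dominates $\bX$, and $\bX$ is fPO.

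No step is a real obstacle. The only point needing care is the first one: the lower bound must hold for fractional allocations, not just integral ones. This follows from the linearity of the cost extension, since each item's unit of mass is split among agents who all value it at $1$ if it lies in $M^+$.
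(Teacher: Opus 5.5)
Your proof is correct and matches the paper's reasoning. The paper only notes that every allocation has social cost at least $|M^+|$, so an allocation attaining this bound cannot be Pareto dominated, because domination would strictly lower the social cost. Your explicit check that this lower bound also holds for fractional allocations, by linearity of the cost extension, supplies the one detail the paper leaves implicit.
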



Next, we show the main result of this subsection. 

\begin{lemma} \label{lemma: potential_power}
    If an allocation $\bX$ lexicographically minimizes $\mathcal{P}$, then $\bX$ is WEFX and fPO.

\end{lemma}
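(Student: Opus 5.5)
The argument has two parts: efficiency follows from the first coordinate of $\mathcal{P}$, and fairness comes from a case analysis on a hypothetical WEFX violation.

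\textbf{Efficiency.} Give every item of $M^0$ to some agent who values it at $0$, and distribute $M^+$ arbitrarily. This allocation has social cost exactly $|M^+|$, which is the minimum possible. So any minimizer $\bX$ of $\mathcal{P}$ satisfies $sc(\bX)=|M^+|$, and Lemma~\ref{lemma: fpo} gives fPO. This also fixes structure: every item of $M^0$ sits with an agent valuing it at $0$, and every item of $M^+$ costs $1$ to its holder. Consequently $c_i(X_i)$ is the number of $M^+$-items in $X_i$, and $i\in N_0$ exactly when $X_i$ contains an $M^0$-item. From now on, any modification we use must keep this property, so that social cost does not increase.

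\textbf{Fairness: setting up a violation.} Suppose $\bX$ is not WEFX. Then there are agents $i,j$ and an item $e\in X_i$ with
\[
\frac{c_i(X_i\setminus\{e\})}{w_i}>\frac{c_i(X_j)}{w_j}.
\]
Since costs are binary, $c_i(X_i\setminus\{e\})\le \hat c_i(X_i)$, and $\hat c_i(X_i)\in\{c_i(X_i),\,c_i(X_i)-1\}$. The first value occurs iff $i\in N_0$, i.e.\ $X_i$ contains a zero-cost item. So the violation says $\hat c_i(X_i)/w_i> c_i(X_j)/w_j$. We split according to whether $X_j$ contains items of $M^0$.

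\textbf{Case A: $X_j$ contains an $M^0$-item $g$ with $c_i(g)=0$.} Moving $g$ from $j$ to $i$ keeps the social cost, but it does not lower $\hat c_i(X_i)$, so this move is not useful. The useful move goes the other way: transfer an $M^+$-item $h\in X_i$ to $j$. This keeps $sc$ unchanged, since $h$ costs $1$ to everyone.
\begin{itemize}
\item Agent $i$'s value $\hat c_i/w_i$ drops strictly (by $1/w_i$), or stays put only in boundary cases, which we check using binarity.
\item Agent $j$'s new value is $\hat c_j(X_j\cup\{h\})/w_j\le c_j(X_j)/w_j+\dots$, which we must bound.
\end{itemize}
The key inequality needed is
\[
\frac{\hat c_j(X_j\cup\{h\})}{w_j}<\frac{\hat c_i(X_i)}{w_i}.
\]
First, $c_i(X_j)\ge c_j(X_j)$, because $X_j$'s $M^+$-items cost $1$ to both agents. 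Second, $\hat c_j(X_j\cup\{h\})\le c_j(X_j)$ when $j\in N_0$, and $\le c_j(X_j)$ also when $j\in N_1$, since removing the new unit item brings it back. So $\hat c_j(X_j\cup\{h\})\le c_j(X_j)\le c_i(X_j)$, and hence
\[
\frac{\hat c_j(X_j\cup\{h\})}{w_j}\le \frac{c_i(X_j)}{w_j}<\frac{\hat c_i(X_i)}{w_i}.
\]
Thus the new allocation moves one coordinate strictly below the old value $\hat c_i(X_i)/w_i$ and raises no other coordinate above it. It is therefore lexicographically smaller in $\Phi$, which contradicts minimality.

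This argument needs $X_i$ to contain an $M^+$-item. If $X_i$ has none, then $c_i(X_i)=0$ and no violation is possible. So Case A is in fact handled uniformly by this transfer, whatever $X_j$ contains.

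\textbf{Where $i$'s value does not drop: the main obstacle.} The transfer fails only when $\hat c_i$ does not decrease. That happens when $i\in N_1$ holds exactly one zero-cost item's worth of slack, or more precisely when $X_i$ consists only of $M^+$-items, so that $\hat c_i=c_i-1$ and removing $h$ also lowers $c_i$ by one. Then $\hat c_i$ drops to $c_i-2$ and the decrease is fine.

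The delicate subcase is $i\in N_0$ with $c_i(X_i)=1$: removing $h$ turns $\hat c_i$ from $1$ to $0$, still a decrease. So $\Phi$ alone seems to suffice. The real difficulty is ties, where the bound on $j$ is not strict, for instance when $j$ ends up exactly equal to $i$'s old value. This is where $-\Psi$ enters: $\Psi$ breaks ties in favour of larger total cost held by $N_0$-agents, and we need it to break the tie correctly.

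I expect the main obstacle to be checking every $N_0$/$N_1$ membership combination for $i$ and $j$, and verifying that either $\Phi$ strictly decreases or $\Phi$ is unchanged while $\Psi$ strictly increases. The delicate case is an $N_0$-agent $i$ giving a unit item to an $N_1$-agent $j$ with equality in the weighted comparison.
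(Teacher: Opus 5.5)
You establish efficiency correctly, and your reduction of a violation to $\hat c_i(X_i)/w_i > c_i(X_j)/w_j$ with $X_i\cap M^+\neq\emptyset$ is also fine. The fairness argument fails at the bound on agent $j$. Your claim $\hat c_j(X_j\cup\{h\})\le c_j(X_j)$ for $j\in N_0$ is false. If $X_j$ holds an item of cost $0$ to $j$, removing that item from $X_j\cup\{h\}$ leaves cost $c_j(X_j)+1$, so $\hat c_j(X_j\cup\{h\})=c_j(X_j)+1$. This does no harm when some item of $S=X_j\cap M^0$ costs $1$ to $i$, since then $c_i(X_j)\ge c_j(X_j)+1$. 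But suppose $S\neq\emptyset$ and all of $S$ costs $0$ to $i$. Then $c_i(X_j)=c_j(X_j)$, and the single transfer can make $\Phi$ lexicographically \emph{larger}. For example, take $w_i=2/3$, $w_j=1/3$, $X_i=\{h_1,h_2\}\subseteq M^+$ and $X_j=\{g\}$, where $g$ costs $0$ to both agents. The violation is $3/2>0$, yet moving $h_1$ to $j$ turns $\Phi=(3/2,0)$ into $(3,0)$. So your assertion that the transfer works ``whatever $X_j$ contains'' is wrong. The proposal never resolves this case; it ends by anticipating an obstacle rather than handling it.

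The missing idea is an exchange rather than a transfer, and this is the second case of the paper's proof. Give $h$ to $j$ and, at the same time, move all of $S$ to $i$. The social cost is unchanged because $S$ costs $0$ to both agents. Agent $j$ leaves $N_0$ with $\hat c_j$ unchanged at $c_j(X_j)$. Agent $i$ enters $N_0$ while her cost drops by one.

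If $i$ was already in $N_0$, then $\hat c_i$ drops and $\Phi$ strictly decreases. If $i\in N_1$, then $\hat c_i$, and hence $\Phi$, are unchanged, but $\Psi$ increases by $\hat c_i(X_i)/w_i-c_j(X_j)/w_j=\hat c_i(X_i)/w_i-c_i(X_j)/w_j>0$. This is exactly where $-\Psi$ is needed.

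Your guess about the tie case (an $N_0$-agent $i$ giving an item to an $N_1$-agent $j$) is misplaced. There $X_j\subseteq M^+$, so $\hat c_j(X_j\cup\{h\})=c_j(X_j)=c_i(X_j)$ and your strict bound already holds. The genuinely tied configuration is $i\in N_1$, $j\in N_0$ with $S$ costing $0$ to $i$, and it is handled by the exchange.
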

\begin{proof}
By assigning each chore in $M^0$ to an agent who incurs zero cost and allocating the chores in $M^+$ arbitrarily, we minimize the social cost to $|M^+|$.
Since $\bX$ lexicographically minimizes $\mathcal{P}$, its first coordinate must satisfy $sc(\bX)=|M^+|$, which implies that $\bX$ is fPO by Lemma~\ref{lemma: fpo}.

Therefore, it remains to prove that $\bX$ is WEFX.
Suppose, for contradiction, that $\bX$ is not WEFX.
Then, there exist agents $i, j \in N$ such that
\begin{equation}
    \frac{\hat{c}_i(X_i)}{w_i}>\frac{c_i(X_j)}{w_j} \ge 0.
    \label{equation:not_WEFX}
\end{equation}

The violating inequality implies $\hat{c}_i(X_i)>0$.
Given that $sc(\bX)=|M^+|$, all items in $M^0$ must have been assigned to agents who incur zero cost for them.
Consequently, the positive cost incurred by agent $i$ must be caused by items from $M^+$, which implies $X_i \cap M^+ \neq \emptyset$. 
Let $S = X_j \cap M^0$ be the zero-cost items allocated to agent $j$.
We consider two cases.

\paragraph{Case 1: $S=\emptyset$, or there exists some $e'\in S$ such that $c_i(e')=1$.}
In this case, we construct a new allocation $\mathbf{Y}$ by transferring an item $e \in X_i \cap M^+$ from agent $i$ to agent $j$:
$$
Y_i = X_i \setminus \{e\}, \qquad Y_j = X_j \cup \{e\}, \qquad Y_k = X_k \text{ for every } k \notin \{i,j\}.
$$
Since $e \in M^+$, this transfer preserves the total social cost, i.e., $sc(\mathbf{Y}) = sc(\bX)$, which maintains fPO by Lemma~\ref{lemma: fpo}.
Removing $e$ from $X_i$ yields $\hat{c}_i(Y_i) = \hat{c}_i(X_i) - 1$, which implies that $\hat{c}_i(Y_i)/w_i < \hat{c}_i(X_i)/w_i$.
We next show that $\hat{c}_j(X_j)/w_j$ and $\hat{c}_j(Y_j)/w_j$ are also smaller than $\hat{c}_i(X_i)/w_i$.
\begin{itemize}
    \item If $S=\emptyset$, every item in $X_j$ belongs to $M^+$, and hence $c_i(X_j)=c_j(X_j)$. Therefore,
    $$
    \frac{\hat{c}_j(X_j)}{w_j}\leq \frac{\hat{c}_j(Y_j)}{w_j}=\frac{c_j(X_j)}{w_j}=\frac{c_i(X_j)}{w_j}<\frac{\hat{c}_i(X_i)}{w_i}.
    $$
    \item Otherwise, $j\in N_0$. Since some $e'\in S$ costs one to agent $i$, while every item in $M^+$ costs one to both agents, we have $c_i(X_j)\geq c_j(X_j)+1$. Therefore,
    $$
    \frac{\hat{c}_j(X_j)}{w_j}<\frac{\hat{c}_j(Y_j)}{w_j}=\frac{c_j(X_j)+1}{w_j}\leq\frac{c_i(X_j)}{w_j}<\frac{\hat{c}_i(X_i)}{w_i}.
    $$
\end{itemize}
Therefore, $\Phi(\mathbf{Y})$ is lexicographically smaller than $\Phi(\bX)$.
Combined with $sc(\mathbf{Y}) = sc(\bX)$, this contradicts the minimality of $\mathcal{P}(\bX)$.

\paragraph{Case 2: $S\neq\emptyset$ and $c_i(e')=0$ for every $e'\in S$.}
In this case, we construct a new allocation $\mathbf{Y}$ by moving $e$ from agent $i$ to agent $j$ while shifting the entire set $S$ from agent $j$ to agent $i$:
$$
Y_i = (X_i \setminus \{e\}) \cup S, \qquad Y_j = (X_j \cup \{e\}) \setminus S, \qquad Y_k = X_k \text{ for every } k \notin \{i,j\}.
$$
Note that $S$ incurs zero cost for both agents $i$ and $j$.
Since $e\in M^+$, this exchange preserves the social cost, and we have $sc(\bX) = sc(\mathbf{Y})$.
Before the exchange, $j\in N_0$; after losing all chores in $S=X_j\cap M^0$, agent $j$ belongs to $N_1$.
Therefore, we have $\hat{c}_j(Y_j)=c_j(X_j)=\hat{c}_j(X_j)$.
Agent $i$ receives a zero-cost chore and therefore belongs to $N_0$ after the exchange, while her own cost decreases by one.
\begin{itemize}
    \item If $i\in N_0$ before the exchange, then $\hat{c}_i(Y_i)=\hat{c}_i(X_i)-1$. Hence, $\Phi(\mathbf{Y})$ is lexicographically smaller than $\Phi(\bX)$, and so is $\mathcal{P}(\mathbf{Y})$.

    \item If $i\in N_1$ before the exchange, then $\hat{c}_i(Y_i)=c_i(X_i)-1=\hat{c}_i(X_i)$, which implies $\Phi(\mathbf{Y})=\Phi(\bX)$.
    Moreover, as $i$ enters $N_0$ and $j$ leaves $N_0$, we have
    $$
    \Psi(\mathbf{Y})-\Psi(\bX)
    = \frac{c_i(X_i)-1}{w_i}-\frac{c_j(X_j)}{w_j}
    = \frac{\hat{c}_i(X_i)}{w_i}-\frac{c_j(X_j)}{w_j} 
    = \frac{\hat{c}_i(X_i)}{w_i}-\frac{c_i(X_j)}{w_j} 
    > 0,
    $$
    where the last equality holds because $c_i(S) = c_j(S) = 0$ and the last inequality holds by Equation~\eqref{equation:not_WEFX}.
    Therefore, the first two coordinates of $\mathcal{P}$ remain unchanged while its third coordinate $-\Psi$ strictly decreases, again contradicting the minimality of $\bX$.
\end{itemize}

Consequently, in both cases, a WEFX violation yields an allocation with a lexicographically smaller potential.
Therefore, $\bX$ is WEFX and fPO.
\end{proof}
\subsection{A Polynomial-Time Algorithm}\label{subsection: WEFXPO}
Observe that the previous potential function reveals the underlying structure for computing WEFX allocations. 
Unfortunately, merely minimizing the potential function $\mathcal{P}$ only guarantees existence. 
In this subsection, we design a local search algorithm inspired by the potential function and show that by carefully transferring items, the algorithm (Algorithm~\ref{alg: WEFX}) will terminate in polynomial time.
We first provide an overview of Algorithm~\ref{alg: WEFX}.

\paragraph{Algorithm Overview.}
Algorithm~\ref{alg: WEFX} begins by partitioning the chores into two sets, $M^+$ and $M^0$ as in the previous definition. 
Each chore in $M^0$ is assigned to an agent who incurs a cost of zero for it, while the chores in $M^+$ are initially allocated arbitrarily. 
Starting from this initial allocation, the algorithm iteratively resolves WEFX violations.
Specifically, whenever the algorithm identifies a pair of agents $i, j \in N$ under the current allocation $\bX$ such that $\hat{c}_i(X_i)/w_i > c_i(X_j)/w_j$, it performs local reallocations until this violation is successfully resolved.

\begin{algorithm}[!ht]
\caption{Computation of WEFX and fPO Allocation for Binary Chores}
\label{alg: WEFX}
\KwIn{Weighted binary instance $I = (N, M, \{c_i\}_{i \in N}, \{w_i\}_{i \in N})$}
    Initialize $X_i \gets \emptyset$ for $i \in N$; \\
    Let $M^+ = \{e \in M: c_i(e) = 1 \text{ for all } i \in N\}$, $M^0 = M \setminus M^+$; \\
    \For{$e \in M^0$}{
        Choose any agent $i \in N$ such that $c_i(e) = 0$; \\
        $X_i \gets X_i \cup \{e\}$; \\
    }
    \For{$e \in M^+$}{
        Choose any agent $i \in N$; \\
        $X_i \gets X_i \cup \{e\}$; \\
    }
    \While{there exist agents $i$ and $j$ such that $\hat{c}_i(X_i)/w_i > c_i(X_j)/w_j$}{
        Pick any $e \in X_i \cap M^+$; \\
        $S \gets X_j \cap M^0$; \\
        \If{$S = \emptyset$ or there exists $e' \in S$ with $c_i(e') = 1$}{
            $X_i \gets X_i \setminus \{e\}$; \\
            $X_j \gets X_j \cup \{e\}$; \\
        } \Else {
            $X_i \gets (X_i \setminus \{e\}) \cup S$; \\
            $X_j \gets (X_j \cup \{e\}) \setminus S$; \\
        }
    }
\KwOut{A WEFX and fPO allocation $\bX$}
\end{algorithm}

\medskip

It is straightforward to verify that throughout the whole algorithm, every item $e\in M^0$ is allocated to an agent with zero cost for it.
Therefore, we have the following immediately.

\begin{lemma} \label{lemma: fpo_poly}
Throughout the execution of Algorithm~\ref{alg: WEFX}, every chore $e \in M^0$ is assigned to an agent $i \in N$ satisfying $c_i(e)=0$.
\end{lemma}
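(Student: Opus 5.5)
The plan is an induction on the number of iterations of Algorithm~\ref{alg: WEFX}, with the invariant that every chore in $M^0$ is held by an agent who has zero cost for it. The base case is the initialization. In the first \textbf{for} loop, each $e \in M^0$ is given to an agent $i$ with $c_i(e)=0$; such an agent exists by the definition of $M^0$ as $M \setminus M^+$. The second loop only assigns chores of $M^+$, so it cannot affect any $M^0$ chore. Hence the invariant holds when the \textbf{while} loop starts.

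For the inductive step I go through the two branches of the loop body. First I note that a step is only executed when $\hat{c}_i(X_i)/w_i > c_i(X_j)/w_j \ge 0$, so $\hat{c}_i(X_i) > 0$. By the invariant, every $M^0$ chore in $X_i$ costs agent $i$ zero, so the positive cost must come from $M^+$. Therefore $X_i \cap M^+ \neq \emptyset$, and the instruction ``pick any $e \in X_i \cap M^+$'' is well defined.

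In the first branch, only $e \in M^+$ moves, so the owners of all $M^0$ chores are unchanged. In the second branch, $e \in M^+$ moves, and so does the whole set $S = X_j \cap M^0$, from $j$ to $i$. The branch condition says $S \neq \emptyset$ and $c_i(e')=0$ for every $e' \in S$. So each moved $M^0$ chore lands with an agent who has zero cost for it, and all other $M^0$ chores stay where they were. The invariant is therefore preserved.

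There is no real obstacle in this argument. The only point needing care is the well-definedness of the choice of $e$, which itself relies on the invariant holding before that iteration. This is why I structure the proof as an induction rather than as a direct check.
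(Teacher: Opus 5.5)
Your proof is correct and takes the same approach as the paper, which only remarks that the invariant is straightforward to verify: initialization places each $M^0$ chore with a zero-cost agent, the \texttt{if} branch moves only a chore of $M^+$, and the \texttt{else} branch moves $S$ only when every chore in it costs agent $i$ zero. Your extra observation that the invariant forces $X_i \cap M^+ \neq \emptyset$, so the choice of $e$ is well defined, is exactly how the paper itself uses this lemma in the proof of Lemma~\ref{lemma: polynomial}.
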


Given that the WEFX property is immediately satisfied when Algorithm~\ref{alg: WEFX} terminates, we only need to show that it runs in polynomial time.

\begin{lemma} \label{lemma: polynomial}
    Algorithm~\ref{alg: WEFX} terminates in polynomial time.
\end{lemma}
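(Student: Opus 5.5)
The plan is to turn the argument of Lemma~\ref{lemma: potential_power} into a counting argument. Each iteration of the while-loop is one of the exchanges used there: the Case~1 transfer or the Case~2 swap, applied to the violating pair $(i,j)$. That proof already shows that every iteration keeps $sc(\bX)=|M^+|$, and Lemma~\ref{lemma: fpo_poly} guarantees that no chore of $M^0$ ever reaches an agent who dislikes it. It also shows that every iteration either strictly decreases $\Phi$, or leaves $\Phi$ unchanged and strictly increases $\Psi$. Lexicographic decrease of $\Phi$ alone does not give a polynomial bound, since $\Phi$ ranges over exponentially many vectors. I will therefore replace it with an integer-valued surrogate that strictly drops in the ``$\Phi$-decreasing'' iterations, and use a second integer surrogate for the ``$\Psi$-increasing'' iterations.

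For the first surrogate, consider the finite set $V=\{k/w_\ell : \ell\in N,\ 0\le k\le m\}$, which has at most $n(m+1)$ values. Let $r(x)\in\{1,\dots,|V|\}$ be the rank of $x$ in $V$, with equal values receiving equal ranks, so that $x<y$ implies $r(x)<r(y)$. Define
\[
F(\bX)=\sum_{\ell\in N}\ \sum_{k=1}^{\hat c_\ell(X_\ell)} r\!\left(k/w_\ell\right),
\]
which is a nonnegative integer bounded by $n\,m\,|V|=O(n^2m^2)$.

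In Case~1, agent $i$ loses her top term $r(\hat c_i(X_i)/w_i)$. Agent $j$ gains at most one term, namely $r(\hat c_j(Y_j)/w_j)$, and the proof of Lemma~\ref{lemma: potential_power} shows $\hat c_j(Y_j)/w_j<\hat c_i(X_i)/w_i$. Since ranks respect strict order, $F$ drops by at least $1$. In Case~2 with $i\in N_0$, $\hat c_i$ drops by one while $\hat c_j$ is unchanged, so again $F$ drops by at least $1$.

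For the remaining iterations (Case~2 with $i\in N_1$), $\hat c$ is unchanged for every agent, so $F$ is unchanged. For these I use a second integer surrogate,
\[
G(\bX)=\sum_{\ell\in N_0} r\!\left(c_\ell(X_\ell)/w_\ell\right)\le n|V|.
\]
In such a step $j$ leaves $N_0$ and $i$ enters it with $c_i(Y_i)/w_i=\hat c_i(X_i)/w_i>c_j(X_j)/w_j$, exactly the inequality used in the $\Psi$ computation. Hence $G$ strictly increases by at least $1$. Consequently, between two consecutive $F$-decreasing iterations there are at most $n|V|$ iterations of this type. The total number of iterations is therefore at most $(F_0+1)(n|V|+1)=O(n^3m^3)$. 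Each iteration, including the search for a violating pair, takes polynomial time ($O(n^2m)$ naively), which gives the claim.

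The main obstacle is confirming that the rank surrogate $F$ really decreases in Case~1. This needs the strict inequality for $j$'s new top term, which holds both when $S=\emptyset$ and when $j\in N_0$. It also needs a check that in Case~2 with $i\in N_0$, agent $j$'s $\hat c$ is unchanged ($j$ moves from $N_0$ to $N_1$ and $\hat c_j(Y_j)=c_j(X_j)=\hat c_j(X_j)$). A secondary point is that $G$ may jump arbitrarily during $F$-decreasing steps. This is harmless, because the count only uses that $G$ lies in $[0,n|V|]$ and strictly increases within each run of non-$F$ steps.
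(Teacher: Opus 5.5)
Your proof is correct and follows essentially the paper's approach. Your $F$ is the paper's $\Rank(\bX)$, with equal values sharing a rank instead of index tie-breaking; this is harmless because every inequality you need is strict. Your $G$ plays the role of the paper's swap-step potential $f(\bX)=\sum_{i\in N_0}p(i)$, using global value ranks of $c_\ell(X_\ell)/w_\ell=\hat c_\ell(X_\ell)/w_\ell$ rather than agent positions, which gives a weaker but still polynomial per-run bound ($O(n^2m)$ versus $n(n+1)/2$). One arithmetic slip: your stated bounds $F_0\le nm|V|$ and $|V|\le n(m+1)$ give $O(n^4m^3)$, not $O(n^3m^3)$; the latter needs the sharper $F\le m|V|$ from $\sum_\ell \hat c_\ell(X_\ell)\le sc(\bX)=|M^+|$, though either bound suffices.
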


\begin{proof}
If $M^+=\emptyset$, the algorithm only allocates items in $M^0$ and terminates immediately.
Therefore, we assume that $M^+\neq\emptyset$ in the following.
Let $\bX$ and $\bX'$ denote the allocations before and after an iteration, respectively.
The while-condition implies $\hat{c}_i(X_i)>0$, which further implies $X_i \cap M^+ \neq \emptyset$ by Lemma~\ref{lemma: fpo_poly}.
Next, we introduce some notations. 

Consider all pairs $(i,t)$ where $i\in N$ and $t \in \{1,\ldots,\vert{}M^+\vert{}\}$. 
We sort these pairs in non-decreasing order based on the value of $t/w_i$, breaking ties in favor of the smaller agent index. 
Let $\rk(i,t)$ denote the position (rank) of the pair $(i,t)$ in this total ordering, which is an integer in $[n\cdot |M^+|]$.
Note that by definition, the minimum rank $1$ corresponds to the pair achieving $\min_{i \in N} \{1/w_i\} = 1/\max_{i \in N} \{w_i\}$, while the maximum rank $n\cdot |M^+|$ corresponds to the pair achieving $|M^+| / \min_{i \in N} \{w_i\}$. 
We define the overall rank of an allocation $\bX$ as follows:
$$
\Rank(\bX)
=\sum_{i\in N}\sum_{t=1}^{\hat{c}_i(X_i)}\rk(i,t).
$$
Note that we have $\Rank(\bX) \leq n^2 \cdot |M^+|^2 \leq n^2m^2$.

In the following, we analyze how an iteration changes $\Rank(\bX)$.
Suppose first that the \texttt{if} branch is executed.
Removing $e$ from $X_i$ gives $\hat{c}_i(X'_i)=\hat{c}_i(X_i)-1$, so the term $\rk(i,\hat{c}_i(X_i))$ is removed from $\Rank(\bX)$.
If $S=\emptyset$, then $X_j$ contains only chores in $M^+$, so $\hat{c}_j(X'_j)=c_j(X_j)=c_i(X_j)$.
Otherwise, there exists $e'\in S$ (note that $S = X_j \cap M^0$) such that $c_i(e') = 1$, and hence 
\begin{equation*}
    \hat{c}_j(X'_j) = c_j(X'_j) = c_j(X_j \cup \{e\}) = c_j(X_j)+1 \leq c_i(X_j).
\end{equation*}
In either case, the while-condition gives $\hat{c}_j(X'_j)/w_j\leq c_i(X_j)/w_j<\hat{c}_i(X_i)/w_i$.
Depending on the contents of $X_j$, we consider the following two cases:
\begin{itemize}
    \item If $X_j\neq\emptyset$, since it holds $\hat{c}_j(X'_j)=\hat{c}_j(X_j)+1$, the term $\rk(j,\hat{c}_j(X'_j))$ is added to overall rank.
    It follows that 
    \begin{equation*}
        \Rank(\bX') - \Rank(\bX) = \rk(j, \hat c_j(X'_j)) - \rk(i, \hat c_i(X_i)) < 0.
    \end{equation*}
    \item Otherwise $X_j=\emptyset$, then $\hat{c}_j(X'_j)=0$ and no term is added.
    It simply follows that 
    $$\Rank(\bX') - \Rank(\bX) = - \rk(i, \hat c_i(X_i))<0.
    $$
\end{itemize}
Consequently, we have $\Rank(\bX')<\Rank(\bX)$ in this branch.

It remains to consider that the \texttt{else} branch is executed.
It follows that $S \neq \emptyset$, and $c_i(S) = c_j(S) = 0$. 
Since $X'_j = (X_j \setminus S) \cup \{e\}$, we have $\hat{c}_j(X'_j) = c_j(X_j \setminus S) = c_j(X_j) = \hat{c}_j(X_j)$. 
Hence the change of overall rank only depends on those related to agent $i$.
We distinguish two cases based on agent $i$'s initial status.
\begin{itemize}
    \item If $i \in N_0$ prior to the update, then $\hat{c}_i(X'_i) = \hat{c}_i(X_i) - 1$. Consequently, the term $\rk(i, \hat{c}_i(X_i))$ is eliminated from the summation, yielding $\Rank(\bX') < \Rank(\bX)$.
    \item If $i \in N_1$, then $i$ enters $N_0$ and $\hat{c}_i(X'_i) = c_i(X'_i) = c_i(X_i) - 1 = \hat{c}_i(X_i)$. Under this scenario, we have $\Rank(\bX') = \Rank(\bX)$.
\end{itemize}
An iteration falling into the latter case is referred to as a \emph{swap step}. 
Notably, any iteration that is not a swap step (which we define as a \emph{non-swap step}) strictly reduces $\Rank(\bX)$.
Given that $\Rank(\bX)$ always takes integer values, and that each non-swap step strictly decreases this rank by at least $1$ while each swap step leaves it unchanged, the algorithm can perform at most $n^2m^2$ non-swap steps throughout its entire execution.

It remains to bound the number of swap steps.
Fix a maximal consecutive sequence of swap steps.
The preceding analysis shows that every swap step preserves $\hat c_i(X_i)/w_i$ for every agent $i\in N$.
Hence, throughout this sequence, the agents have a fixed total ordering by non-decreasing $\hat c_i(X_i)/w_i$, with ties broken in favor of smaller agent indices.
Let $p(i)\in[n]$ denote the position of agent $i$ in this ordering, and define a potential function $f(\bX)=\sum_{i\in N_0}p(i)$.

Consider a swap step involving a violating pair of agents $i$ and $j$, and let $\bX$ and $\bX'$ denote the allocations before and after the step, respectively.
Before the step, $i\in N_1$ and $j\in N_0$, whereas afterward, $i$ enters $N_0$ and $j$ leaves $N_0$.
Moreover, the \texttt{else} branch implies that $S=X_j\cap M^0$ is nonempty and costs zero to agent $i$.
By Lemma~\ref{lemma: fpo_poly}, every chore in $S$ also costs zero to agent $j$.
Since every chore in $X_j\setminus S$ belongs to $M^+$, we obtain $c_i(X_j)=c_j(X_j)=\hat{c}_j(X_j)$.
Together with the while-condition, this gives
$$
    \frac{\hat{c}_i(X_i)}{w_i}
    >
    \frac{c_i(X_j)}{w_j}
    =
    \frac{\hat{c}_j(X_j)}{w_j}.
$$
Therefore, we have $p(i)>p(j)$, and hence
$$
    f(\bX')-f(\bX)
    =
    p(i)-p(j)
    \geq 1.
$$
Since the potential function $f$ is integer-valued and satisfies $f(\bX)\leq\sum_{r=1}^n r= n(n+1)/2$, every maximal consecutive sequence contains at most $n(n+1)/2$ swap steps.
Since any two maximal sequences of swap steps are separated by a non-swap step, the total number of steps is at most $n(n+1)(n^2 m^2+1)/2$, which is polynomial in $n$ and $m$.
Furthermore, each iteration can be implemented in polynomial time, as it requires only enumerating all pairs of agents and scanning their bundles.
Hence, Algorithm~\ref{alg: WEFX} terminates in polynomial time.
\end{proof}

Consequently, we can establish the following theorem.
\begin{theorem} \label{theorem: WEFX_PO}
    For the allocation of binary additive chores to weighted agents, Algorithm~\ref{alg: WEFX} computes WEFX and fPO allocations in polynomial time.
\end{theorem}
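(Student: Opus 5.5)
The theorem follows by combining the lemmas already established for Algorithm~\ref{alg: WEFX}: termination in polynomial time, WEFX at termination, and fPO via minimum social cost. I plan to argue each of the three properties separately.

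\emph{Polynomial running time.} The initialization loops run in $O(nm)$ time, since they compute $M^+$ and $M^0$ and assign each chore once. Lemma~\ref{lemma: polynomial} bounds the number of while-iterations by $n(n+1)(n^2m^2+1)/2$. Each iteration detects a violating pair and performs a transfer or swap, which takes polynomial time. I would also confirm that every iteration is well defined. Whenever the while-condition holds we have $\hat c_i(X_i)>0$. By Lemma~\ref{lemma: fpo_poly}, every chore of $M^0$ in $X_i$ costs $i$ zero, so $X_i\cap M^+\neq\emptyset$ and the choice of $e$ is possible.

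\emph{WEFX.} At termination there is no pair $i,j$ with $\hat c_i(X_i)/w_i > c_i(X_j)/w_j$. Since $\hat c_i$ is the cost after removing the chore whose removal leaves the largest cost, this is exactly the WEFX condition for every $e\in X_i$. The case $X_i=\emptyset$ is trivial.

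\emph{fPO.} By Lemma~\ref{lemma: fpo_poly}, throughout the run (and hence at output) every chore of $M^0$ sits with an agent for whom it costs $0$. Every chore of $M^+$ costs $1$ to whoever holds it. Therefore $sc(\bX)=|M^+|$, and Lemma~\ref{lemma: fpo} gives fPO.

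The only substantive part is the running-time bound, specifically controlling the swap steps that leave $\Rank$ unchanged. That work is already done in Lemma~\ref{lemma: polynomial}, so the theorem needs only these checks.
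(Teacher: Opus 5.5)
Your proposal is correct and follows the paper's argument. WEFX is read off from the termination condition of the while-loop, fPO follows from Lemma~\ref{lemma: fpo_poly} giving $sc(\bX)=|M^+|$ together with Lemma~\ref{lemma: fpo}, and polynomial running time is exactly Lemma~\ref{lemma: polynomial}, whose proof also contains your check that $X_i\cap M^+\neq\emptyset$ whenever the loop condition holds.
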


\subsection{Non-existence of WEFX in More General Settings}\label{subsection: extension}
While Theorem~\ref{theorem: WEFX_PO} guarantees the existence of WEFX allocations in the weighted setting, a natural subsequent step is to characterize the tight boundaries of this existence.
In the following, we explore three natural relaxations of our setting. 
Specifically, on the one hand, we still assume binary costs but replace additivity with cancelability. 
On the other hand, we maintain additivity while extending the cost domain to bi-valued or restricted additive costs.
Unfortunately, we show that WEFX allocations need not exist in each of these directions.
\begin{example}[Non-existence of WEFX for cancelable binary instances]\label{example:cancelable-nonexistence}
Consider two agents with weights $w_1=4/9$ and $w_2=5/9$, and $M=\{e_1,e_2,e_3,e_4\}$.
The cost function of each agent for every subset $S\subseteq M$ is given in Table~\ref{tab:cancelable-nonexistence}.
Although both cost functions are binary and cancelable, no WEFX allocation exists for this instance.
\end{example}

\begin{table}[!htbp]
\centering
\caption{Cost functions for the binary cancelable counterexample}
\label{tab:cancelable-nonexistence}
\upshape
\begin{tabular}{ccc}
\toprule
& $S = \emptyset$ & $S\neq\emptyset$ \\
\midrule
Agent 1 & $0$ & $1$ \\
Agent 2 & $0$ & $|S|$ \\
\bottomrule
\end{tabular}
\end{table}

We first show that $c_1$ and $c_2$ are binary cancelable cost functions.
Since all marginal costs are easily seen to be either $0$ or $1$, it suffices to verify cancelability.
To this end, consider any two bundles $S, T \subseteq M$ and any chore $e \notin S \cup T$.
\begin{itemize}
    \item For $c_1$, adding $e$ yields a non-empty bundle with a cost of $1$, so $c_1(S \cup \{e\}) > c_1(T \cup \{e\})$ never holds, making the condition trivially satisfied.
    \item For $c_2$, since $c_2(S) = |S|$, the inequality $c_2(S \cup \{e\}) > c_2(T \cup \{e\})$ immediately implies $c_2(S) > c_2(T)$ by $|S| > |T|$.
\end{itemize}
Hence, both $c_1$ and $c_2$ are binary cancelable cost functions.

We now show that no allocation $(X_1, X_2)$ in this instance satisfies WEFX.
If $|X_1| \in \{0, 4\}$, there exists one agent that receives all chores, which trivially violates WEFX.
Otherwise, we have $|X_1| \in \{1, 2, 3\}$ and both agents receive non-empty bundles:
\begin{itemize}
\item If $|X_1| = 1$, from the perspective of agent $2$, we have
$$
\frac{\hat c_2(X_2)}{w_2} = \frac{18}{5} > \frac{9}{4} = \frac{c_2(X_1)}{w_1}.
$$
\item Otherwise, $|X_1| \in \{2, 3\}$, agent $1$ violates the condition of WEFX as 
$$
\frac{\hat c_1(X_1)}{w_1} = \frac{9}{4} > \frac{9}{5} = \frac{c_1(X_2)}{w_2}.
$$
\end{itemize}
In all cases, WEFX is violated, which completes the proof.
\medskip

We next enlarge the binary domain into two distinct directions: bi-valued instances and restricted additive instances. 
The counterexamples and non-existence proofs for both settings are deferred to Appendix~\ref{appendix:wefx-nonexistence}.
Together with the cancelable binary instance above, these results show that the guarantee of Theorem~\ref{theorem: WEFX_PO} relies jointly on binary costs and additivity.

\section{BoBW Guarantees for Binary Chores}
In this section, we shift our focus to randomized allocations and investigate the existence of BoBW guarantees. 
We show that the feasibility of BoBW properties depends heavily on whether agents have equal entitlements. Specifically, we first demonstrate that under unequal weights, ex-ante WEF is incompatible with ex-post WEF1. 
This impossibility naturally motivates our exploration of the unweighted setting, where we successfully establish the coexistence of ex-ante EF and fPO with ex-post EFX and fPO for binary additive costs (Theorem~\ref{theorem: Bobw}).  
Finally, we extend these fairness properties to the broader domain of binary cancelable costs, where ex-ante EF and ex-post EFX can still be successfully guaranteed (Theorem~\ref{theorem: cancelable}).

We first show the incompatibility of the BoBW guarantee in the weighted setting.
Consider an instance with two agents, $N = \{1, 2\}$, and two chores, $M = \{e_1, e_2\}$, where $w_1 > w_2$ and $w_1 + w_2 = 1$. 
Assume both agents have identical binary additive costs, where every chore has a cost of $1$ for both agents, as shown in Table \ref{tab:wefx_impossibility}.
\begin{table}[ht]
\centering
\caption{Non-existence of ex-ante WEF and ex-post WEF1}
\label{tab:wefx_impossibility}
\begin{tabular}{lcc}
\toprule
 & $e_1$ & $e_2$ \\
\midrule
Agent 1  & $1$ & $1$ \\
Agent 2  & $1$ & $1$ \\
\bottomrule
\end{tabular}
\end{table}

Suppose, for the sake of contradiction, that there exists a randomized allocation that is ex-ante WEF and ex-post WEF1.
To satisfy ex-post WEF1, every realization must allocate exactly one item to each agent.
Consequently, we have
$$
\frac{\mathbb{E}[c_2(X_2)]}{w_2} = \frac{1}{w_2} >  \frac{1}{w_1} = \frac{\mathbb{E}[c_2(X_1)]}{w_1},
$$
which implies that agent $2$ envies agent $1$ in expectation, a contradiction to ex-ante WEF.
Therefore, no randomized allocation can guarantee ex-ante WEF and ex-post WEF1.

Given the strong incompatibility result, in the following, we focus on the unweighted setting.

\subsection{Feasible Configurations and Ex-post Guarantees}

Intuitively, to ensure an EFX allocation, our idea is to evenly partition the items in $M^+$. 
We assume that $|M^+| = sn + r$, where $0 \le r < n$.
Specifically, we select a subset of agents $N_H \subseteq N$ (called \emph{high agents}) with $|N_H| = r$, each of whom receives $s + 1$ chores from $M^+$. The remaining agents (called \emph{low agents}) receive exactly $s$ chores from $M^+$.
The induced envy from a high agent to a low agent will be compensated by a \emph{feasible configuration} involving items in $M^0$.
Below, we formally define the notion of a feasible configuration.
\begin{definition}[Feasible Configuration]
    A pair $(\bP, N_H)$ is a feasible configuration if:
    \begin{itemize}
        \item $\bP=(P_1, \dots, P_n)$ is a partition of $M^0$, where $c_i(P_i) = 0$ for every $i \in N$;
        \item $N_H \subseteq N$ is the set of high agents and $|N_H| = r$;
        \item for every agent $i \in N_H$ with $P_i \neq \emptyset$ and every agent $j \notin N_H$, we have $c_i(P_j) \geq 1$. 
    \end{itemize}
\end{definition}

Note that in the remainder of this section, we focus exclusively on feasible configurations, i.e., we always maintain a balanced partition of $M^+$ and distinguish between high agents (who receive $s + 1$ chores) and low agents (who receive $s$ chores).
We first show constructively that a feasible configuration always exists.

\begin{lemma}\label{lemma: feasible_configuration}
    A feasible configuration can be computed in polynomial time.
\end{lemma}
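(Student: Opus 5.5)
We construct the configuration by a simple local search whose potential is the number of high agents holding a nonempty bundle. Start with any partition $\bP$ of $M^0$ that assigns each chore $e\in M^0$ to some agent $i$ with $c_i(e)=0$. Such an agent exists by the definition of $M^0$, so $c_i(P_i)=0$ for every $i$. Take $N_H$ to be an arbitrary set of $r$ agents. The first two conditions of a feasible configuration then hold. We will maintain them throughout, and only repair the third condition.

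\textbf{Repair step.} While $(\bP,N_H)$ is not feasible, there exist $i\in N_H$ with $P_i\neq\emptyset$ and $j\notin N_H$ with $c_i(P_j)=0$. Perform the following update:
\begin{itemize}
\item move the whole bundle $P_j$ to agent $i$, setting $P_i\gets P_i\cup P_j$ and $P_j\gets\emptyset$;
\item exchange the roles of the two agents, setting $N_H\gets (N_H\setminus\{i\})\cup\{j\}$.
\end{itemize}

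\textbf{Invariants.} Since $c_i(P_j)=0$ and costs are additive, $c_i(P_i\cup P_j)=0$. Agent $j$ now holds the empty bundle, so every agent still has a zero-cost bundle and $\bP$ remains a partition of $M^0$. Moreover, $|N_H|=r$ is preserved.

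\textbf{Termination.} Let $g=|\{k\in N_H: P_k\neq\emptyset\}|$. In one step:
\begin{itemize}
\item agent $i$, which had a nonempty bundle, leaves $N_H$;
\item agent $j$ joins $N_H$ with an empty bundle;
\item every other agent keeps both its bundle and its status.
\end{itemize}
Hence $g$ decreases by exactly one. Because $g\le r<n$ initially, there are at most $r$ iterations. Each iteration only requires scanning pairs of agents and evaluating additive costs of bundles, which takes polynomial time.

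\textbf{Feasibility at termination.} When the loop stops, no violating pair exists. Thus every high agent with a nonempty bundle has $c_i(P_j)\ge 1$ for every low agent $j$, which is exactly the third condition.

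The one point needing care is that a step can create new violations for other high agents. For instance, a low agent $i$ now holds a larger bundle, and a high agent might find it costless. This does not matter, because termination is governed by $g$ and not by the number of violations. Since $g$ strictly decreases, no cycling can occur, which resolves what would otherwise be the main obstacle.
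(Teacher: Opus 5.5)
Your proof is correct, but it takes a different route from the paper.

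\textbf{The paper's route.} The paper uses a one-pass priority construction. It fixes an ordering of the agents and gives each chore of $M^0$ to the first agent in the ordering with zero cost for it. It then takes $N_H$ to be agents with empty bundles. If there are fewer than $r$ of those, it adds the earliest agents with nonempty bundles. Feasibility follows from the ordering: every low agent $j$ comes after every nonempty high agent $i$, so each chore in $P_j$ costs $1$ to $i$.

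\textbf{Your route.} You start from an arbitrary zero-cost partition and an arbitrary $N_H$. You repair each violation by merging $P_j$ into $P_i$ and swapping the two agents' roles. The potential $g$, the number of high agents with nonempty bundles, drops by exactly one per step. This bounds the number of repairs by $r$ and makes newly created violations harmless. Your argument also covers $P_j=\emptyset$: then $c_i(P_j)=0$ is a violation, and the step still decreases $g$.

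\textbf{Comparison.} The paper's version is simpler: there is no iteration, and the feasibility certificate is a direct ordering argument. The paper reuses that argument verbatim for the cancelable setting in the appendix. Yours works from any initial configuration and changes it only where violations occur. It relies only on zero-cost bundles being closed under union, not on a particular assignment rule.
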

\begin{proof}
Fix an arbitrary ordering of the agents.
For each chore $e \in M^0$, allocate $e$ to the first agent $i$ in this ordering satisfying $c_i(e)=0$; the existence of such an agent follows from the definition of $M^0$.
Let $\bP=(P_1,\ldots,P_n)$ denote the resulting partition, and let $N_1=\{i\in N:P_i=\emptyset\}$ be the set of agents who do not receive any item from $M^0$.

If $|N_1|\geq r$, let $N_H$ consist of any $r$ agents in $N_1$.
The second and third conditions in the definition of a feasible configuration then hold immediately.

If $|N_1|<r$, let $N_H$ contain all agents in $N_1$, together with the first $r-|N_1|$ agents in the ordering who have nonempty bundles.
To verify feasibility, consider any agent $i\in N_H$ with $P_i\neq\emptyset$ and any agent $j\notin N_H$.
Since $N_1\subseteq N_H$, we have $P_j\neq\emptyset$.
Moreover, agent $i$ precedes agent $j$ in the ordering.
For every chore $e\in P_j$, agent $j$ is the first agent in the ordering who incurs zero cost for $e$.
It follows that $c_i(e)=1$, and hence $c_i(P_j)\geq1$.

Therefore, $(\bP,N_H)$ is a feasible configuration.
The construction scans the agents only once for each chore and thus runs in polynomial time.
\end{proof}

Given any feasible configuration $(\bP, N_H)$, we can construct an allocation $\bX$ by assigning $P_i$ to every agent $i$ and then arbitrarily partitioning $M^+$ so that each high agent receives $s+1$ chores and each low agent receives $s$ chores.
Below, we demonstrate that the resulting allocation $\bX$ is always EFX and fPO.

\begin{lemma} \label{lemma: expost_EFX}
    Any integral allocation $\bX$ induced by a feasible configuration is EFX and fPO.
\end{lemma}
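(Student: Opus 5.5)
The plan is to handle fPO first and then check EFX pair by pair, splitting on whether the envious agent is high or low and on whether her $M^0$-bundle is empty.

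\textbf{Efficiency.} In the induced allocation $\bX$, each agent $i$ receives $X_i = P_i \cup Q_i$, where $Q_i \subseteq M^+$. We have $|Q_i| = s+1$ if $i\in N_H$ and $|Q_i| = s$ otherwise. Since $c_i(P_i)=0$ by the first condition of a feasible configuration, $c_i(X_i)=|Q_i|$, and so $sc(\bX)=|M^+|$. Lemma~\ref{lemma: fpo} then gives fPO immediately.

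\textbf{EFX.} Fix agents $i,j$. The key quantity is $\hat c_i(X_i)$:
\begin{itemize}
\item if $P_i\neq\emptyset$, removing a zero-cost item gives $\hat c_i(X_i)=c_i(X_i)=|Q_i|$;
\item if $P_i=\emptyset$ and $Q_i\neq\emptyset$, all items cost $1$, so $\hat c_i(X_i)=|Q_i|-1$.
\end{itemize}
On the other side, $c_i(X_j)=|Q_j|+c_i(P_j)\ge |Q_j|$. We need $\hat c_i(X_i)\le c_i(X_j)$ and split into cases.

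\begin{enumerate}
\item \emph{$i$ is low.} Then $\hat c_i(X_i)\le s\le |Q_j|\le c_i(X_j)$.
\item \emph{$i$ is high and $j$ is high.} Then $\hat c_i(X_i)\le s+1=|Q_j|\le c_i(X_j)$.
\item \emph{$i$ is high and $j$ is low.}
 \begin{itemize}
 \item If $P_i=\emptyset$, then $\hat c_i(X_i)=s\le |Q_j|$.
 \item If $P_i\neq\emptyset$, the third condition of a feasible configuration gives $c_i(P_j)\ge1$. Hence $c_i(X_j)\ge s+1=\hat c_i(X_i)$.
 \end{itemize}
\end{enumerate}
The case $X_i=\emptyset$ is trivial.

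The only nonroutine step is the last subcase, a high agent holding zero-cost items envying a low agent. It is exactly what the third condition of a feasible configuration is designed for, so I expect no real obstacle. The one point to handle carefully is the computation of $\hat c_i$ when $P_i\neq\emptyset$: removing a zero-cost item does not lower the cost, so $\hat c_i(X_i)$ stays at $|Q_i|$.
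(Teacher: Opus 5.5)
Your proof is correct and takes essentially the same route as the paper: fPO follows from $sc(\bX)=|M^+|$ and Lemma~\ref{lemma: fpo}, and EFX follows from the bounds $c_i(X_j)\ge s$ and $c_i(X_i)\le s+1$, with the third condition of a feasible configuration excluding the only potentially violating case (a high agent with $P_i\neq\emptyset$ against a low agent). The paper phrases the EFX part as a short contradiction argument that isolates that case, while you write out the case split explicitly; the content is the same.
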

\begin{proof}
The allocation of items in $M^+$ guarantees that $c_i(X_j) \geq s$ for all $i,j\in N$;
the allocations of items in $M^0$ guarantees that $c_i(X_i) \leq s+1$ for all $i\in N$.
Therefore, if agent $i$ is not EFX towards agent $j$, then it must be the case that $c_i(X_i) = s+1$, $c_i(X_j) = s$, and we further have $P_i \neq \emptyset$.
However, the definition of feasible configuration rules out such a case.
Therefore, the allocation $\bX$ is EFX.
%
%
Finally, since $sc(\bX)=|M^+|$, Lemma~\ref{lemma: fpo} implies that $\bX$ is fPO.
\end{proof}

\subsection{Reallocation via Perfect Matchings}
\label{subsection: matching_reallocation}

Given the EFX guarantee provided by a feasible configuration, to further ensure ex-ante EF, we construct a polynomial-time lottery from the perspective of matching.
Fix the feasible configuration $(\bP,N_H)$ returned by Lemma~\ref{lemma: feasible_configuration}. 
Let $N_H^+=\{i\in N_H:P_i\neq\emptyset\}$ be the high agents who receive some item from $M^0$.
Note that agents in $N_H^+$ do not envy and are not envied by any other agents:
\begin{itemize}
    \item For all $j\in N$ and $i\in N_H^+$, we have $c_j(X_j)\leq s+1 \leq c_j(X_i)$.
    \item By definition of feasible configuration, for all $i\in N_H^+$ and $j\notin N_H$, we have $c_i(P_j)\geq 1$. Therefore, $c_i(X_i) = s+1 \leq c_i(X_j)$.
\end{itemize}

Therefore, we keep agents in $N_H^+$ and their bundles fixed and reassign the other bundles among the remaining agents $N'=N\setminus N_H^+$. 
We construct a bipartite graph with all agents in $N'$ on the left. 
The right side also contains $|N'|$ nodes, which consist of $n-r$ bundle nodes $P_k$ for every low agent $k \notin N_H$ and $q = r - |N_H^+|$ high slots.
Next, we construct the edges.

For each $i \in N'$ and $k \notin N_H$, the graph contains the edge $(i, P_k)$ if and only if $c_i(P_k) = 0$. 
Matching agent $i$ to $P_k$ assigns $P_k$ to her and makes her a low agent who receives $P_k$ along with $s$ chores from $M^+$. 
Additionally, every agent in $N'$ is adjacent to all high slots.
Matching an agent to a high slot assigns her no chores from $M^0$ and lets her be a high agent who receives exactly $s+1$ chores from $M^+$.
Intuitively, agents in $N'$ can switch their roles depending on the chosen matching.
%
%
%
%
Given a perfect matching $T$ in this graph, let $(\tilde \bP, \tilde N_H)$ denote the induced configuration.
Each agent $i \in N_H^+$ keeps their original bundle. 
For each agent $i \in N'$,
\begin{itemize}
    \item if $i$ is matched to a bundle vertex $P_k$, we set $\tilde P_i = P_k$ and $i \notin \tilde N_H$;
    \item if $i$ is matched to a high slot, we set $\tilde P_i = \emptyset$ and $i \in \tilde N_H$.
\end{itemize}

In the following, we show that any perfect matching induces a feasible configuration.

\begin{lemma}\label{lemma: matching_configuration}
    The graph always contains a perfect matching, where every such matching induces a feasible configuration.
\end{lemma}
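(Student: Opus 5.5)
Two things need showing: that a perfect matching exists, and that every perfect matching induces a feasible configuration. For existence I will write one down directly. Match each low agent $k\notin N_H$ to its own bundle vertex $P_k$; the edge $(k,P_k)$ exists because $c_k(P_k)=0$. Then match each of the $q=r-|N_H^+|$ agents in $N_H\setminus N_H^+$ to a distinct high slot. Every agent of $N'$ is adjacent to every high slot, so this is fine. The counts agree: $|N'| = (n-r) + (r-|N_H^+|)$, which equals the number of bundle vertices plus the number of high slots. So this assignment is a perfect matching; it simply reproduces the original configuration.

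For feasibility, fix a perfect matching $T$ and check the three conditions in turn.

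\emph{Partition condition.} The bundles $P_i$ with $i\in N_H^+$ stay with their owners. Each bundle $P_k$ with $k\notin N_H$ is assigned to exactly one agent of $N'$. The bundles $P_i$ with $i\in N_H\setminus N_H^+$ are empty. Hence $\tilde\bP$ is a partition of $M^0$. Each agent has zero cost for her new bundle: for $i\in N_H^+$ this is inherited, for $i$ matched to a high slot $\tilde P_i=\emptyset$, and for $i$ matched to $P_k$ the edge condition gives $c_i(P_k)=0$.

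\emph{Size condition.} We have $\tilde N_H = N_H^+\cup\{i\in N' : i \text{ is matched to a high slot}\}$. Since $T$ is perfect, all $q$ high slots are used, so $|\tilde N_H| = |N_H^+| + q = r$.

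\emph{Envy condition.} Take $i\in\tilde N_H$ with $\tilde P_i\neq\emptyset$ and $j\notin\tilde N_H$. Agents matched to high slots have empty bundles, so necessarily $i\in N_H^+$ and $\tilde P_i=P_i$. Also $j\in N'$ is matched to some bundle vertex, so $\tilde P_j=P_k$ for some $k\notin N_H$. The original configuration is feasible, and $i\in N_H$ with $P_i\neq\emptyset$ and $k\notin N_H$, so $c_i(P_k)\ge 1$. Hence $c_i(\tilde P_j)\ge 1$, as required.

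The only step needing care is this last one. The key observation is that every new high agent has an empty $M^0$-bundle, so the envy condition only ever involves agents of $N_H^+$ against original low bundles, which the original feasibility already covers.
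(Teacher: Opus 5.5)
Your proof is correct and follows the paper's argument: the same explicit matching for existence (each low agent $k$ to $P_k$, and the agents of $N_H\setminus N_H^+$ to the high slots), and the same reduction of the third condition to the original feasibility of $(\bP,N_H)$. You are slightly more careful than the paper, since you explicitly verify that $\tilde\bP$ partitions $M^0$ and that agents newly matched to high slots have empty $M^0$-bundles, both of which the paper leaves implicit.
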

\begin{proof}
    We first show the existence of a perfect matching. 
    We can match each bundle vertex $P_k$ (where $k \notin N_H$) to agent $k$, which is valid since $c_k(P_k) = 0$. The remaining $q$ high slots are then matched arbitrarily to the agents in $N_H \setminus N_H^+$, with each agent receiving exactly one slot.
    
    Now, fix any perfect matching $T$ in the graph. 
    Let $(\tilde \bP, \tilde N_H)$ denote the configuration induced by $T$, where $\tilde P_i$ denotes the bundle of items from $M^0$ allocated to agent $i$.
    %
    Clearly, by construction we have $c_i(\tilde P_i) = 0$ for all $i\in N$.
    Furthermore, the construction of $T$ yields $|\tilde N_H| = |N_H^+| + q = |N_H^+| + r - |N_H^+| = r$.
    Finally, by the feasibility of the initial configuration $(\bP, N_H)$, we have $c_i(P_k) \geq 1$ for all $i\in N_H^+$ and $k\notin N_H$.
    Since each low agent $j$ is matched to an original bundle $P_k$ for some $k \notin N_H$, the feasibility of $(\tilde \bP, \tilde N_H)$ follows.
\end{proof}

\subsection{A Linear Program for Ex-ante Envy-freeness}
\label{subsection: matching_lp}

In the following, we introduce a linear program (LP) for fractional matchings.
Specifically, for $i\in N'$ and $k\notin N_H$, let $x_{i,k}$ denote the fractional assignment of bundle vertex $P_k$ to agent $i$. 
For $i\in N'$ and $h\in [q]$, let $y_{i,h}$ denote the fractional assignment of high slot $h$ to agent $i$.  
Our goal is to find a fractional perfect matching that satisfies the ex-ante EF constraints, which can then be decomposed into a lottery over integral perfect matchings.
To this end, we introduce the following feasibility LP, which defines the valid polytope of fractional assignments:
\begin{equation*}
\begin{aligned}
\sum_{k \notin N_H} x_{i,k} + \sum_{h=1}^q y_{i,h} &= 1 && \text{for } i \in N', \\
\sum_{i \in N'} x_{i,k} &= 1 && \text{for } k \notin N_H, \\
\sum_{i \in N'} y_{i,h} &= 1 && \text{for } h \in [q], \\
x_{i,k} &= 0 && \text{if } c_i(P_k) > 0, \\
\sum_{h=1}^q y_{i,h} &\leq \sum_{h=1}^q y_{j,h} + \sum_{k \notin N_H} \left ( c_i(P_k) \cdot x_{j,k} \right ) && \text{for } i,j \in N', \\
x_{i,k}, y_{i,h} &\geq 0. &&
\end{aligned}
\end{equation*}

In particular, once a feasible solution $(x,y)$ is decomposed into a lottery over perfect matchings, the second-to-last constraint ensures ex-ante EF among the agents in $N'$.
In the following, we establish the feasibility of this LP.

\begin{lemma}\label{lemma: matching_lp}
    The linear program is feasible and can be solved in polynomial time.
\end{lemma}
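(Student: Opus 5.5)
We will show feasibility by exhibiting a specific fractional perfect matching, namely a minimizer of a strictly convex function over the matching polytope. Computability then follows because the LP has polynomially many variables and constraints.

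\emph{Setup.} Let $Q$ be the polytope defined by all constraints of the LP except the ex-ante EF constraints (the second-to-last family). So $Q$ is the set of fractional perfect matchings of the bipartite graph of Section~\ref{subsection: matching_reallocation}. By Lemma~\ref{lemma: matching_configuration} the graph has a perfect matching, so $Q$ is nonempty. It is also compact. For $(x,y)\in Q$ write $y_i=\sum_{h=1}^q y_{i,h}$ for the probability that agent $i$ is high. Consider the continuous function $g(x,y)=\sum_{i\in N'} y_i^2$. It attains a minimum on $Q$ at some point $(x^*,y^*)$. We claim this point satisfies every EF constraint.

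\emph{Which constraints can fail.} Fix $i,j\in N'$ and suppose the constraint for $(i,j)$ fails at $(x^*,y^*)$, that is,
\[
y^*_i > y^*_j + \sum_{k\notin N_H} c_i(P_k)\, x^*_{j,k}.
\]
If every $k$ with $x^*_{j,k}>0$ had $c_i(P_k)\ge 1$, then the right-hand side would be at least $y^*_j+\sum_k x^*_{j,k}=y^*_j+(1-y^*_j)=1\ge y^*_i$, which contradicts the violation. Hence:
\begin{itemize}
\item there is some $k\notin N_H$ with $x^*_{j,k}>0$ and $c_i(P_k)=0$, so the edge $(i,P_k)$ exists;
\item the violation gives $y^*_i>y^*_j\ge 0$, so there is a slot $h$ with $y^*_{i,h}>0$.
\end{itemize}

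\emph{Improving move.} For small $\varepsilon>0$, move $\varepsilon$ of $P_k$ from $j$ to $i$ and $\varepsilon$ of slot $h$ from $i$ to $j$:
\[
x_{j,k}\mathrel{-}=\varepsilon,\quad x_{i,k}\mathrel{+}=\varepsilon,\quad y_{i,h}\mathrel{-}=\varepsilon,\quad y_{j,h}\mathrel{+}=\varepsilon.
\]
This keeps every row and column sum unchanged and uses only allowed edges, so the new point stays in $Q$. The objective changes by
\[
(y^*_i-\varepsilon)^2+(y^*_j+\varepsilon)^2-(y^*_i)^2-(y^*_j)^2 = -2\varepsilon\,(y^*_i-y^*_j)+2\varepsilon^2,
\]
which is negative for $\varepsilon<y^*_i-y^*_j$. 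This contradicts the minimality of $(x^*,y^*)$. Therefore $(x^*,y^*)$ satisfies all ex-ante EF constraints, and the LP is feasible.

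\emph{Polynomial time.} The LP has $O(n^2)$ variables and $O(n^2)$ constraints, and its coefficients $c_i(P_k)\le m$ are computable in polynomial time. Since it is feasible, a feasible point can be found in polynomial time by any polynomial-time LP algorithm (ellipsoid or interior point). We never need to compute the minimizer of $g$ itself; it serves only as an existence witness.

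The step that requires care is the reduction showing that a violated constraint forces a zero-cost bundle of $i$ to carry positive mass in $j$'s row. This is exactly what makes the exchange move available. Once it is in place, the convexity argument is routine.
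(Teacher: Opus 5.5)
Your proof is correct, and it reaches feasibility by a different route from the paper.

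The paper's argument is combinatorial. It takes the uniform distribution over all integral perfect matchings. For fixed $i,j\in N'$, it builds an injection $\varphi$ from the matchings in which $i$ holds a high slot into those in which $j$ holds a high slot or some $P_k$ with $c_i(P_k)\geq 1$, by swapping the partners of $i$ and $j$. Averaging then gives the EF constraint.

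You argue extremally inside the fractional perfect-matching polytope $Q$. At a minimizer of $\sum_{i\in N'} y_i^2$, a violated constraint for $(i,j)$ forces some $P_k$ with $c_i(P_k)=0$ to carry positive mass in $j$'s row; your row-sum observation $\sum_k x^*_{j,k}=1-y^*_j$ shows this. Shifting $\varepsilon$ along the swap then strictly lowers the objective.

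The exchange is the same in both proofs, but the two routes buy different things:
\begin{itemize}
\item Yours applies the swap to a single fractional point. This avoids checking that $\varphi$ is injective, i.e., that the images of $\mathcal{A}\cap\mathcal{B}$ and $\mathcal{A}\setminus\mathcal{B}$ are disjoint.
\item Your witness is the optimum of a convex quadratic program, so it could in principle be computed directly.
\item The paper's witness is a canonical lottery over integral matchings, used purely as an existence certificate.
\end{itemize}
Both proofs then solve the LP for the algorithmic claim, so the polynomial-time parts coincide.

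One small correction: $g$ is convex but not strictly convex in $(x,y)$, since it depends only on the aggregates $y_i$. Your argument needs neither strict convexity nor convexity, only that $g$ attains a minimum on the nonempty compact set $Q$, so you should drop that adjective.
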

\begin{proof}
Let $\mathcal{T}$ be the set of all perfect matchings in the graph, which is non-empty by Lemma~\ref{lemma: matching_configuration}.
In the following, we use the uniform distribution over $\mathcal{T}$ to prove the feasibility of the LP.
Note that the expectation of the uniform distribution induces a feasible solution to the LP, which we denote by $(x,y)$.
The key is to show that the solution satisfies the constraint
$$
\sum_{h=1}^q y_{i,h} \leq \sum_{h=1}^q y_{j,h} + \sum_{k \notin N_H} \left ( c_i(P_k) \cdot x_{j,k} \right ) \quad \text{for } i,j \in N'.
$$
The remaining constraints follow because every matching in $\mathcal{T}$ is a perfect matching in the graph. Fix any two agents $i,j\in N'$ and define the following two collections of perfect matchings:
\begin{itemize}
    \item $\mathcal{A} = \{T \in \mathcal{T} : i \text{ is matched to a high slot}\}$,
    \item $\mathcal{B} = \{T \in \mathcal{T} : j \text{ is matched to a high slot, or to some } P_k \text{ with } c_i(P_k) \geq 1\}$.
\end{itemize}
We construct an injection $\varphi:\mathcal{A} \to \mathcal{B}$ as follows. 
If $T \in \mathcal{A} \cap \mathcal{B}$, we set $\varphi(T) = T$. 
Otherwise, for $T \in \mathcal{A} \setminus \mathcal{B}$, agent $i$ is matched to a high slot while agent $j$ is matched to a bundle vertex $P_k$ with $c_i(P_k) = 0$. 
In this case, we define $\varphi(T)$ as the perfect matching formed by swapping the assignments of $i$ and $j$, so that $i$ is now matched to $P_k$ and $j$ is matched to a high slot. 
Since $c_i(P_k) = c_j(P_k) = 0$ and high slots are adjacent to all agents in $N'$, both newly formed edges belong to the bipartite graph. 
Thus, $\varphi(T)$ is a valid perfect matching belonging to $\mathcal{B}$.
Note that this swap operation is clearly reversible, as swapping the assignments back uniquely recovers $T \in \mathcal{A} \setminus \mathcal{B}$.
Moreover, for any $T \in \mathcal{A} \setminus \mathcal{B}$, its image $\varphi(T)$ matches $i$ to a bundle vertex, whereas for any $T \in \mathcal{A} \cap \mathcal{B}$, $i$ remains matched to a high slot in $\varphi(T)$. Thus, the images of $\mathcal{A} \setminus \mathcal{B}$ and $\mathcal{A} \cap \mathcal{B}$ under $\varphi$ are disjoint, confirming that $\varphi$ is an injection.
Since $\varphi$ is an injection, we have $|\mathcal{A}|\leq|\mathcal{B}|$.
Moreover, every matching has probability $1/|\mathcal{T}|$ under this distribution, and hence
$$
\begin{aligned}
\sum_{h=1}^qy_{i,h}
&=\frac{|\mathcal{A}|}{|\mathcal{T}|}
\leq\frac{|\mathcal{B}|}{|\mathcal{T}|}=\sum_{h=1}^qy_{j,h}
+\sum_{\substack{k\notin N_H\\c_i(P_k)\geq1}}x_{j,k}\leq\sum_{h=1}^qy_{j,h}+\sum_{k\notin N_H} \left ( c_i(P_k) \cdot x_{j,k} \right ),
\end{aligned}
$$
which is exactly the fairness constraint displayed above.
Therefore, the LP is feasible. 
Moreover, since the program involves $O(n^2)$ variables and constraints, and all coefficients are integers bounded by $m$, it can be solved in polynomial time.
\end{proof}

\subsection{Lottery Construction via Birkhoff--von Neumann Decomposition}
Given a feasible solution $(x, y)$ to the LP from Lemma~\ref{lemma: matching_lp}, we next show that it can be decomposed into a convex combination of integral perfect matchings.
To this end, we introduce the Birkhoff--von Neumann decomposition lemma, a standard tool in the BoBW literature for decomposing a fractional assignment matrix into a lottery over integral allocations~\cite{journals/ior/AzizFSV24, conf/atal/0001GM23, journals/jair/HoeferSV24}.
\begin{lemma}[Birkhoff--von Neumann]\label{lemma: birkhoff}
    A matrix is doubly stochastic if it is nonnegative and every row and column sums to one. Every doubly stochastic matrix of dimension $n$ can be decomposed in polynomial time into a convex combination of at most $n^2$ permutation matrices.
\end{lemma}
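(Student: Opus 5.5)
The statement is the classical Birkhoff--von Neumann theorem, so the plan is the standard constructive proof, organized so the polynomial bound on the number of permutation matrices and on the running time is explicit. Let $D$ be an $n\times n$ doubly stochastic matrix. We will repeatedly extract a permutation matrix supported on the positive entries of the current matrix, subtract the largest possible multiple of it, and recurse. The invariant is that the current matrix $D'$ is nonnegative and all of its row and column sums equal a common value $\lambda\in[0,1]$. Initially $\lambda=1$, and the process stops when $\lambda=0$.

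The key step, and the main obstacle, is to show that whenever $\lambda>0$ the support graph of $D'$ contains a perfect matching. Let $G$ be the bipartite graph with rows on one side, columns on the other, and an edge $(i,k)$ whenever $D'_{ik}>0$. We will verify Hall's condition. For a set $R$ of rows with neighborhood $C$ of columns, the total mass in the rows of $R$ is $\lambda|R|$. All of this mass lies in columns of $C$, whose total mass is $\lambda|C|$. Hence $\lambda|R|\le\lambda|C|$, which gives $|R|\le|C|$ because $\lambda>0$. A perfect matching $\sigma$ can then be found in polynomial time, for instance by Hopcroft--Karp or by the augmenting-path method.

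Given $\sigma$, set $\alpha=\min_i D'_{i\sigma(i)}>0$ and replace $D'$ by $D'-\alpha P_\sigma$, where $P_\sigma$ is the permutation matrix of $\sigma$. The new matrix is nonnegative by the choice of $\alpha$. Its row and column sums all equal $\lambda-\alpha$, so the invariant is preserved. In addition, at least one positive entry becomes zero, and zero entries stay zero, so the support shrinks strictly at every iteration.

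Since the support initially has at most $n^2$ entries, there are at most $n^2$ iterations. When the process stops we have $\lambda=0$, and therefore $D'=0$. This yields $D=\sum_t\alpha_t P_{\sigma_t}$ with at most $n^2$ terms, and taking row sums shows $\sum_t\alpha_t=1$, so this is a convex combination. Each iteration runs in polynomial time, so the whole procedure does as well. In our application, $D$ is the $|N'|\times|N'|$ matrix formed by $(x,y)$ from Lemma~\ref{lemma: matching_lp}, with high slots as columns. Each permutation in the decomposition lies in the support of $D$, so it is a perfect matching of the graph, and Lemma~\ref{lemma: matching_configuration} then applies to every realization.
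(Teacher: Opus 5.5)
Your proof is correct. It is the standard constructive argument: verify Hall's condition on the support via row and column masses, peel off $\alpha P_\sigma$, and note that the support strictly shrinks. The paper gives no proof of this lemma; it simply cites it as a standard BoBW tool, so your argument supplies exactly what that citation rests on.

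One detail is worth adding to make the polynomial-time claim airtight. All entries of $D'$ and all coefficients $\alpha_t$ remain integer multiples of $1/L$ in $[0,1]$, where $L$ is the common denominator of the entries of $D$. Hence bit lengths stay polynomial throughout.
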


\begin{lemma}\label{lemma: matching_decomposition}
    Every feasible solution of the linear program can be decomposed in polynomial time into a convex combination of at most $n^2$ perfect matchings in the graph.
\end{lemma}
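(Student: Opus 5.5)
The plan is to encode a feasible solution $(x,y)$ as an $|N'|\times|N'|$ matrix $D$ whose rows are indexed by the agents in $N'$ and whose columns are indexed by the right-hand vertices of the bipartite graph. These vertices are the $n-r$ bundle vertices $P_k$ with $k\notin N_H$ and the $q$ high slots. We set $D_{i,P_k}=x_{i,k}$ and $D_{i,h}=y_{i,h}$. The right side has $(n-r)+q=(n-r)+r-|N_H^+|=|N'|$ vertices, so $D$ is square.

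The first three families of LP constraints state exactly that every row and every column of $D$ sums to one. Nonnegativity is imposed explicitly, so $D$ is doubly stochastic. By Lemma~\ref{lemma: birkhoff}, we can compute in polynomial time a decomposition $D=\sum_{\ell}\lambda_\ell \Pi_\ell$ with $\lambda_\ell>0$, $\sum_\ell\lambda_\ell=1$, and at most $|N'|^2\le n^2$ permutation matrices $\Pi_\ell$.

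The remaining step is to check that each $\Pi_\ell$ is a perfect matching of the graph, not merely a permutation of the complete bipartite graph. Suppose $\Pi_\ell$ has a $1$ in entry $(i,P_k)$. Since all coefficients $\lambda_\ell$ and all matrices are nonnegative, $D_{i,P_k}\ge\lambda_\ell>0$, so $x_{i,k}>0$. The constraint $x_{i,k}=0$ whenever $c_i(P_k)>0$ then forces $c_i(P_k)=0$, which means $(i,P_k)$ is an edge of the graph. Every agent in $N'$ is adjacent to every high slot, so entries in the high-slot columns impose no condition. Hence every $\Pi_\ell$ corresponds to a perfect matching.

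This support argument is the only non-routine point, and it is short. The rest is bookkeeping on dimensions, together with the polynomial running time of the decomposition in Lemma~\ref{lemma: birkhoff} applied to a matrix of dimension at most $n$.
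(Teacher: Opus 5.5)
Your proposal is correct and follows the paper's proof essentially step for step: you assemble $(x,y)$ into a doubly stochastic $|N'|\times|N'|$ matrix, apply the Birkhoff--von Neumann lemma, and use $\lambda_\ell>0$ together with the constraint $x_{i,k}=0$ whenever $c_i(P_k)>0$ to show that every permutation matrix is a perfect matching of the graph. Your explicit check that the matrix is square, via $(n-r)+q=|N'|$, is a small detail that the paper leaves implicit.
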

\begin{proof}
Fix any feasible solution $(x,y)$ to the LP computed by Lemma~\ref{lemma: matching_lp}.
We combine $x$ and $y$ into an $|N'|\times |N'|$ matrix whose rows are indexed by $N'$ and whose columns are indexed by the bundle vertices and high slots, with entries given by the corresponding fractional assignment variables.
The first three groups of constraints imply that this matrix is doubly stochastic. By Lemma~\ref{lemma: birkhoff}, it admits a polynomial-time decomposition
$$
(x,y)=\sum_{\ell=1}^d \left ( \lambda_\ell \cdot T_\ell \right ) \text{ with}
\quad
\sum_{\ell=1}^d\lambda_\ell=1,
$$
where $d\leq n^2$ and every $T_\ell$ is a permutation matrix. 
The positive coefficient $\lambda_\ell$ implies that every entry equal to one in $T_\ell$ corresponds to a positive entry of $(x,y)$. Since $x_{i,k}>0$ only if $c_i(P_k)=0$, every positive entry of $(x,y)$ corresponds to an edge of the graph. 
Therefore, each permutation matrix $T_\ell$ represents a perfect matching in the graph.
\end{proof}

\begin{algorithm}[!htbp]
\caption{Polynomial Additive Feasible Configuration Lottery}
\label{alg: bobw_EFEFX}
Compute a feasible configuration $(\bP,N_H)$ using Lemma~\ref{lemma: feasible_configuration}\;
Let $N_H^+=\{i\in N_H:P_i\neq\emptyset\}$, $q=r-|N_H^+|$, and $N'=N\setminus N_H^+$\;
Construct and solve the LP\;
Decompose $(x,y)$ as $(x,y)=\sum_{\ell=1}^d\lambda_\ell T_\ell$ using Lemma~\ref{lemma: matching_decomposition}\;
Select $T_\ell$ with probability $\lambda_\ell$ and construct the induced configuration $(\tilde \bP, \tilde N_H)$\;
Using fixed orderings of the agents and $M^+$, give $s+1$ chores from $M^+$ to each agent in $\tilde N_H$ and $s$ to every other agent\;
\KwOut{A lottery that is ex-ante EF and fPO, and ex-post EFX and fPO}

\end{algorithm}

\begin{theorem} \label{theorem: Bobw}
    For the allocation of binary additive chores to unweighted agents, Algorithm~\ref{alg: bobw_EFEFX} computes a lottery that achieves ex-ante EF, fPO, and ex-post EFX in polynomial time.
\end{theorem}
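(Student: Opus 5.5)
The plan is to assemble the theorem from the lemmas already established, checking four things in order: polynomial running time, ex-post EFX and fPO, ex-ante fPO, and ex-ante EF.

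\textbf{Running time and ex-post guarantees.} Running time follows directly. Lemma~\ref{lemma: feasible_configuration} computes $(\bP,N_H)$, Lemma~\ref{lemma: matching_lp} solves the LP, Lemma~\ref{lemma: matching_decomposition} decomposes it into at most $n^2$ perfect matchings, and building each realized allocation is linear work. For ex-post guarantees, each $T_\ell$ is a perfect matching of the graph, so by Lemma~\ref{lemma: matching_configuration} it induces a feasible configuration $(\tilde\bP,\tilde N_H)$. Step 6 of Algorithm~\ref{alg: bobw_EFEFX} completes this configuration into an allocation, and Lemma~\ref{lemma: expost_EFX} shows that allocation is EFX and fPO.

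\textbf{Ex-ante fPO.} Every realization has social cost $|M^+|$. Hence the expected fractional allocation also has total cost $|M^+|$, which is the minimum possible even among fractional allocations, since each item of $M^+$ costs one to everyone. If a fractional allocation Pareto dominated it, that allocation would have strictly smaller total cost, which is impossible. So the lottery is ex-ante fPO.

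\textbf{Ex-ante EF.} This is the main step. Write $\mathbb{E}[c_i(X_j)] = \mathbb{E}[c_i(\tilde P_j)] + \mathbb{E}[\#M^+\text{ chores of }j]$. The $M^+$ count equals $s$ plus the probability that $j\in\tilde N_H$, which is $\sum_h y_{j,h}$ for $j\in N'$ and $1$ for $j\in N_H^+$. I would split into cases.

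\emph{Case 1: $i,j\in N'$.} Agent $i$ always receives a zero-cost bundle, so $\mathbb{E}[c_i(X_i)] = s+\sum_h y_{i,h}$. Meanwhile $\mathbb{E}[c_i(X_j)] = s+\sum_h y_{j,h}+\sum_k c_i(P_k)x_{j,k}$, and the LP fairness constraint gives exactly the required inequality.

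\emph{Case 2: $i\in N_H^+$.} In every realization $c_i(X_i)=s+1$. Moreover, $c_i(X_j)\ge s+1$ for every $j$: either $j$ is high, or $j$ is low with $\tilde P_j=P_k$ for some $k\notin N_H$, and then $c_i(P_k)\ge1$ by feasibility of the original configuration. So EF holds ex-post, hence ex-ante.

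\emph{Case 3: $j\in N_H^+$ and $i\in N'$.} Then $c_i(X_j)\ge s+1\ge c_i(X_i)$ in every realization.

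The care needed is bookkeeping rather than a new argument. First, $\tilde P_j$ for $j\in N'$ is exactly some $P_k$ with $k\notin N_H$, with probability $x_{j,k}$. Second, the expectations over the Birkhoff--von Neumann lottery equal the LP variables, which holds because the decomposition reproduces $(x,y)$ entrywise.
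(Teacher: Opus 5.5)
Your proposal is correct and follows the paper's proof essentially step for step. It uses the same lemma chain for running time and ex-post EFX/fPO, the same minimum-social-cost argument for ex-ante fPO, and the same three-case split for ex-ante EF: $i\in N_H^+$; $i\in N'$ with $j\in N_H^+$ (both handled realization-wise); and $i,j\in N'$, handled through the LP fairness constraint and the marginals reproduced by the Birkhoff--von Neumann decomposition.
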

\begin{proof}
We first demonstrate that the overall framework runs in polynomial time.
By Lemma~\ref{lemma: feasible_configuration}, a feasible configuration can be efficiently computed.
Based on this configuration, we construct a bipartite graph to characterize the reallocation process.
We then formulate a linear program over all fractional assignments, whose solution can be obtained in polynomial time via Lemma~\ref{lemma: matching_lp}.
Finally, this fractional solution is decomposed into a lottery with a support of at most $n^2$ integral allocations.
Taken together, the total computational complexity remains polynomial.

To complete the proof, we show that every integral allocation in the support is ex-post EFX.
By Lemma~\ref{lemma: matching_configuration}, each such integral allocation induces a valid feasible configuration.
Lemma~\ref{lemma: expost_EFX} then establishes that every realized allocation is EFX and fPO.

It remains to prove that the lottery is ex-ante EF and fPO. 
First, consider any agent $i \in N_H^+$. 
In every allocation in the support, agent $i$ receives bundle $P_i$ and $s+1$ chores from $M^+$, so her realized cost is always $c_i(X_i) = s+1$, giving an expected cost $\mathbb{E}[c_i(X_i)] = s+1$. 
For any other agent $j \in N$, fix a realization $\bX$ induced by a configuration $(\tilde \bP, \tilde N_H)$.
If $j\in \tilde N_H$, $j$ receives $s+1$ chores from $M^+$; if $j \notin \tilde N_H$, $j$ receives $s$ chores from $M^+$ along with $P_j$ where $c_i(P_j) \ge 1$ by feasibility of $(\tilde \bP, \tilde N_H)$.
In both cases, we have $c_i(X_j) \ge s+1$.
Therefore, $\mathbb{E}[c_i(X_j)] \geq \mathbb{E}[c_i(X_i)]$ holds for all $j \in N$.

Next, consider any $i \in N'$ and $j \in N_H^+$. 
By the definition of edges in the bipartite graph, agent $i$ incurs a cost of $s$ when matched to a bundle vertex, and a cost of $s+1$ when matched to a high slot. 
Meanwhile, agent $j$ always receives $s+1$ chores from $M^+$. 
Hence, $c_i(X_i) \leq s+1 \leq c_i(X_j)$ holds in every realization, which implies $\mathbb{E}[c_i(X_i)] \leq \mathbb{E}[c_i(X_j)]$.

Finally, fix $i,j\in N'$. 
Since a bundle vertex $P_k$ can only be assigned to agent $i$ when $c_i(P_k) = 0$, we have
$$
\mathbb{E}[c_i(X_i)] = s + \sum_{h=1}^q y_{i,h}.
$$
On the other hand, the expected cost of agent $i$ for $j$'s bundle is
$$
\mathbb{E}[c_i(X_j)]
=s+\sum_{h=1}^qy_{j,h}+\sum_{k\notin N_H}\left ( c_i(P_k) \cdot x_{j,k} \right ).
$$
The fairness constraint in the LP gives $\mathbb{E}[c_i(X_i)]\leq\mathbb{E}[c_i(X_j)]$.
Therefore, the lottery is ex-ante EF.
For efficiency, the fractional allocation induced by the lottery assigns every chore in $M^0$ only to agents who incur zero cost for it, whereas every chore in $M^+$ contributes one to the total cost. Hence, its social cost is exactly $|M^+|$, which is the minimum possible social cost of any fractional allocation. 
Therefore, the lottery is ex-ante fPO.
\end{proof}

Finally, we show that our technique extends naturally to binary cancelable cost functions.
Although efficiency can no longer be achieved alongside EFX in this broader domain~\cite{journals/tcs/TaoWYZ25}, we can still obtain ex-ante EF and ex-post EFX simultaneously.
\begin{theorem}\label{theorem: cancelable}
For the allocation of binary cancelable chores to unweighted agents, there exists a polynomial-time algorithm that computes a lottery achieving ex-ante EF and ex-post EFX.
\end{theorem}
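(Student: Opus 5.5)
I will follow the same template as Theorem~\ref{theorem: Bobw}, replacing additive structure with what cancelability plus binary marginals give us. The first step is a reduction to a ``zero-chore/positive-chore'' structure. For binary cancelable $c_i$, call a set $Z_i$ of chores \emph{free} for $i$ if $c_i(Z_i)=0$. Since marginals are in $\{0,1\}$ and costs are monotone, $c_i(Z\cup S)$ for $c_i(Z)=0$ need not equal $c_i(S)$. However, cancelability gives the crucial comparison: if $c_i(A)\ge c_i(B)$ then adding a common chore preserves the weak order. That is, by contrapositive, $c_i(B\cup e)>c_i(A\cup e)$ forces $c_i(B)>c_i(A)$. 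Hence agents' rankings of bundles behave like additive ones under common augmentations.

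Next, I will look for a configuration analogous to the feasible configuration, but defined directly on bundles rather than on $M^0/M^+$. The natural candidate is to compute an EFX allocation with the structure used by Tao et al.\ for binary cancelable chores. Concretely, I plan to group the items into ``bundle cores'' $P_1,\dots,P_n$ with $c_i(P_i)=0$, obtained greedily as in Lemma~\ref{lemma: feasible_configuration}: scan the agents in a fixed order, and let the current agent absorb every remaining chore whose marginal cost to her current zero-cost bundle is $0$. The remaining chores $R$ have marginal $1$ for every agent at the moment they were considered. I then split $R$ evenly, with $|R|=sn+r$, designating $r$ high agents. Each agent's realized cost is then exactly $s$ or $s+1$, since her own bundle is zero-cost core plus $s$ or $s+1$ items. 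This requires verifying that adding items of $R$ to a zero-cost core increases cost by exactly one each. That can fail for general cancelable functions, so instead I would define the cost levels via $c_i(X_i)\le |X_i\cap R|$ and a lower bound $c_i(X_j)\ge |X_j\cap R|$ from the greedy maximality, using cancelability to transfer the ``marginal $1$'' property. The EFX argument then mirrors Lemma~\ref{lemma: expost_EFX}. Agent $i$ can fail EFX toward $j$ only if $i$ is high with a nonempty core and $j$ is low. Greedy ordering plus cancelability then gives $c_i(X_j)\ge s+1$.

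For the lottery, I will reuse Sections~\ref{subsection: matching_reallocation}--\ref{subsection: matching_lp} verbatim in structure. Fix high agents with nonempty cores. Let the others be matched to low cores $P_k$ (edge iff $c_i(P_k)=0$, which is the right condition since $P_k$ plus $s$ items of $R$ then costs $i$ at most $s$) or to high slots. The swap injection of Lemma~\ref{lemma: matching_lp} only uses that $c_i(P_k)=c_j(P_k)=0$ is symmetric for the swapped edge, so it carries over, and Birkhoff--von Neumann (Lemma~\ref{lemma: matching_decomposition}) yields the lottery. Ex-ante EF needs the expected cost $\mathbb{E}[c_i(X_j)]$ to be at least $s+\Pr[j\text{ high}]+\Pr[j \text{ gets } P_k \text{ with } c_i(P_k)\ge1]$. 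This holds per realization: $c_i(P_k\cup T)\ge c_i(T)+1$ when $c_i(P_k)\ge 1$, with $T\subseteq R$ of size $s$.

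The main obstacle is exactly these per-realization cost identities. In the additive case, $c_i(P\cup T)=c_i(P)+|T|$ for $T\subseteq M^+$. For cancelable binary costs, I need that $R$-items are ``universally positive'' in the sense $c_i(Q\cup T)=c_i(Q)+|T|$ for every $Q$, and that a costly core keeps its extra unit after adding $T$. I would first try to choose $R$ as the set of chores $e$ with $c_i(\{e\})=1$ for all $i$ and prove, via cancelability applied with $S=\{e\}$ versus $T=\emptyset$ repeatedly, that such items always have marginal $1$. If this fails, I would fall back to dropping the identity and only proving the two inequalities needed (upper bound on own cost, lower bound on others'). Only those enter the EFX and ex-ante EF arguments.
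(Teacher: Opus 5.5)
\paragraph{Verdict.} There is a genuine gap, located at the step you flag as the main obstacle. The per-realization bounds you plan to prove are false for binary cancelable costs, so neither your primary plan nor your fallback can succeed.

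Take $c_i(S)=1$ for every nonempty $S$. This is agent~1 in Example~\ref{example:cancelable-nonexistence}, which the paper verifies is binary and cancelable. Every chore has $c_i(\{e\})=1$, yet $c_i(e\mid\{e'\})=0$. Hence:
\begin{itemize}
\item chores of your $R$ need not have marginal $1$;
\item the bound $c_i(X_j)\ge|X_j\cap R|$ fails once $|X_j\cap R|\ge 2$;
\item for $s\ge 2$ the realized own cost is $1$, not $s$ or $s+1$;
\item $c_i(P_k\cup T)\ge c_i(T)+1$ fails for every nonempty $T$.
\end{itemize}
What cancelability actually gives (Proposition~\ref{proposition: cancelable_equality}) is that chores of equal individual cost can be swapped without changing cost. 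That yields comparisons between bundles, not additive identities, and an agent's cost can saturate under a one-shot even split of $R$.

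\paragraph{The LP step also fails.} The same saturation makes reusing the LP with coefficient $c_i(P_k)$ unsound: $c_i(P_k)$ can exceed the extra cost $c_i(P_k\cup T)-c_i(T)$ that $P_k$ adds on top of $s$ chores of $R$. Concretely, take three agents and chores $a,b_1,b_2,r_1,\dots,r_4$, with $c_1(S)=\min(|S\setminus\{a\}|,2)$ and agents $2,3$ additive with zero cost exactly on $\{a,b_1,b_2\}$. Your construction gives $P_1=\{a\}$, $P_2=\{b_1,b_2\}$, $P_3=\emptyset$, $s=r=1$, high agent~$3$, and $c_1(P_2)=2$. The LP then admits a convex combination of three perfect matchings in which agent~1 takes the high slot with probability $3/5$, while agent~3 takes it with probability $1/5$ and receives $P_2$ with probability $1/5$. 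This gives $\mathbb{E}[c_1(X_1)]=8/5>7/5=\mathbb{E}[c_1(X_3)]$, so ex-ante EF fails.

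\paragraph{How the paper closes the gap.} The missing ingredient is the paper's balance phase. Instead of splitting all individually costly chores at once, it repeatedly gives one chore to each agent from $M^+=\{e\in R: c_i(e\mid A_i)=1\text{ for all } i\}$, recomputed relative to the current bundles, until $|M^+|<n$.
\begin{itemize}
\item This maintains $c_i(A_j)=s$ for all $i,j$ (Lemma~\ref{lemma: cancelable_residual}).
\item Every chore that has become zero-marginal for some agent moves into the residual $M^0$.
\item Edges, feasibility and LP coefficients are defined through $c_i(\cdot\mid A_i)$.
\end{itemize}
Proposition~\ref{proposition: cancelable_equality} then turns $c_i(A_j)=c_i(A_i)$ into exact identities: $c_i(X_j)=s+c_i(P_k\mid A_i)$ for a low $j$ holding $P_k$, and $c_i(X_j)=s+1$ for a high $j\in N'$. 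These are precisely what your EFX and LP-to-EF steps require.

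\paragraph{Salvaging your split.} Your one-shot split can in fact be rescued, but by a different argument than the one you propose:
\begin{itemize}
\item use swaps to show that $c_i(S)$ depends only on how many individually costly chores $S$ contains;
\item record whether $c_i$ still increases from $s$ to $s+1$ such chores;
\item use the indicator $[c_i(P_k)\ge 1]$ as the LP coefficient, whose feasibility is exactly what the swap injection proves.
\end{itemize}
That argument is what you would have to supply; the bounds in your proposal cannot be proved.
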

The corresponding algorithm and its proof are deferred to Appendix~\ref{appendix:cancelable-extension}.

\section{Conclusion and Open Problems}
In this paper, we explore the computation of fair and efficient allocations for indivisible chores in the binary setting. 
For deterministic algorithms, we propose a polynomial-time algorithm to compute allocations that are simultaneously WEFX and fPO. 
For randomized algorithms, we introduce a lottery that ensures ex-ante EF, fPO, and ex-post EFX.
Several open questions deserve future investigation.
First, the existence of EFX allocations for chores under bi-valued instances remains a significant unresolved problem. 
Second, it would be interesting to explore other settings for which ex-ante EF and ex-post EFX can be guaranteed beyond the scope of binary costs.

\section*{Declaration of the Use of AI Tools}
The authors first developed the existence guarantee for both results, and GPT-5.6 Sol further helped to improve the results to polynomial-time computation.
The manuscript was written by the authors, with GPT-5.6 Sol used for language editing and literature searches. 
All arguments were independently verified by the authors, who take full responsibility for the manuscript.

\newpage
\bibliography{ref}
\bibliographystyle{alpha}

\newpage
\appendix
\section{Discussion on Existing Algorithms}\label{appendix:existing-algorithms}
In this appendix, we discuss some natural algorithms from related settings and explain the difficulties of extending them to our setting.

We first consider the algorithm of Wu et al.~\cite{journals/ai/WuZZ25}, which computes WEF1 allocations for additive chores using the \emph{reverse weighted picking sequence} (RWPS).
The algorithm initializes a score $s_i = 0$ for each agent $i$ and constructs a sequence of length $m$ by repeatedly appending the index of an agent with the minimum score\footnote{Note that their algorithm breaks ties by agent index.}, and subsequently increasing her score by $1/w_i$.
The algorithm then reverses this sequence and allows each agent to pick a minimum-cost remaining chore in this reverse order.

We next show that RWPS fails to compute WEFX allocations in our setting.
Consider the following instance.
\begin{example}\label{example: hard}
There are two agents with weights $w_1=2/3$ and $w_2=1/3$, and three chores with the following binary costs.
\begin{center}
\upshape
\begin{tabular}{cccc}
\toprule
\textbf{Agent} & $e_1$ & $e_2$ & $e_3$ \\
\midrule
$1$ & $0$ & $0$ & $1$ \\
$2$ & $1$ & $1$ & $1$ \\
\bottomrule
\end{tabular}
\end{center}
\end{example}

Starting from $(s_1,s_2)=(0,0)$ and breaking the initial tie in favor of agent $1$, RWPS first appends agent $1$ and updates the scores to $(3/2,0)$.
It then appends agent $2$, obtaining $(3/2,3)$, and finally appends agent $1$, obtaining $(3,3)$.
Thus, the forward sequence is $(1,2,1)$, whose reversal is again $(1,2,1)$.
Following this picking order, agent $1$ first takes a zero-cost chore, say $e_1$.
Agent $2$ is indifferent between the two remaining chores; breaking this tie in favor of $e_2$, agent $2$ takes $e_2$ and agent $1$ receives $e_3$.
The resulting allocation is
$$
X_1=\{e_1,e_3\}
\qquad\text{and}\qquad
X_2=\{e_2\},
$$
which is WEF1 but not WEFX.
Hence, RWPS may fail to return a WEFX allocation as it does not control whether an agent's bundle contains a zero-cost chore.
From this example, we observe that we should be very careful in deciding which agents receive items with cost $0$, as such an operation will strengthen the fairness requirement for the agent from WEFX to WEF.

Another natural idea is to borrow the result from the goods setting.
For binary additive goods, Neoh and Teh~\cite{conf/aaai/NeohT25} used a weighted leximin order to establish the existence of WEFX and fPO allocations.
This order adapts the leximin$^{++}$ introduced by Plaut and Roughgarden~\cite{journals/siamdm/PlautR20}. Specifically, it sorts agents by their nondecreasing normalized utilities, defined as $u_i(A_i) = v_i(A_i)/w_i$ where $v_i$ is the valuation function of agent $i$, lexicographically maximizes the resulting vector, and breaks ties using bundle cardinalities.

Therefore, a naive analogue for chores is to lexicographically minimize the normalized cost vector $(c_1(X_1)/w_1, \dots, c_n(X_n)/w_n)$ sorted in nonincreasing order.
However, this analogue already fails for Example~\ref{example: hard}.
Specifically, it uniquely outputs the allocation $\bX = (M, \emptyset)$ with the normalized cost vector $(3/2, 0)$.
This is because assigning any chore to agent 2 results in a normalized cost of at least $1/w_2 = 3$ for her. Although $\bX$ is PO, it is not WEFX.


\section{Counterexamples and Non-existence Proofs for Section~\ref{subsection: extension}}\label{appendix:wefx-nonexistence}
In this appendix, we present the omitted counterexamples and their corresponding analysis for bi-valued and restricted additive instances, as mentioned in Section~\ref{subsection: extension}.

We begin by formally defining both classes of instances. First, we introduce bi-valued instances, a natural extension of binary instances that has been well studied in prior literature.
\begin{definition}[Bi-valued instances]
An additive instance is \emph{bi-valued} if there exist two constants $a>b\geq 0$ such that $c_i(e)\in\{a,b\}$ for every agent $i\in N$ and chore $e\in M$.
\end{definition}

We remark that when $b = 0$, bi-valued instances reduce to binary instances up to scaling\footnote{Notice that all the fairness notions mentioned in this paper are scale-free, i.e., if an allocation satisfies one of these notions, then it remains to satisfy the same notion if we rescale any cost function.}, and a positive value of $b$ moves beyond the binary boundary.

Another generalization of binary instances is restricted additive instances.
\begin{definition}[Restricted additive instances]
An additive instance is \emph{restricted additive} if every chore $e\in M$ has an inherent cost $c(e)\geq 0$ such that $c_i(e)\in\{0,c(e)\}$ for every agent $i\in N$.
\end{definition}

Note that when $c(e)=1$ for every $e \in M$, restricted additive instances also reduce to binary instances.

We are now ready to present our counterexample and its corresponding analysis. Specifically, we first construct a bi-valued instance that fails to admit any WEFX allocation.

\begin{example}[Non-existence of WEFX for bi-valued instances]\label{example:bivalued-nonexistence}
Consider three agents with weights $w_1=1/4$, $w_2=11/24$, and $w_3=7/24$, and six chores $M=A\cup B$, where $A=\{e_1,e_2,e_3,e_4\}$ and $B=\{e_5,e_6\}$.
Their cost functions are given in Table~\ref{tab:bivalued-nonexistence}.
This is a bi-valued instance with values $\{1,4\}$, but it admits no WEFX allocation.
\end{example}

\begin{table}[!htbp]
\centering
\caption{Cost functions in the bi-valued counterexample}
\label{tab:bivalued-nonexistence}
\upshape
\begin{tabular}{ccccccc}
\toprule
& $e_1$ & $e_2$ & $e_3$ & $e_4$ & $e_5$ & $e_6$ \\
\midrule
Agent 1 & $1$ & $1$ & $1$ & $1$ & $4$ & $4$ \\
Agent 2 & $4$ & $4$ & $4$ & $4$ & $1$ & $1$ \\
Agent 3 & $4$ & $4$ & $4$ & $4$ & $1$ & $1$ \\
\bottomrule
\end{tabular}
\end{table}

We now prove the claim in Example~\ref{example:bivalued-nonexistence}.
Fix an allocation $\bX$ and suppose, for contradiction, that it is WEFX.
First, every agent must receive a non-empty bundle. Indeed, suppose towards a contradiction that $X_i = \emptyset$ for some agent $i \in [3]$. 
Since every chore has a strictly positive cost, it must be $|X_j| \leq 1$ for every $j \neq i$ (otherwise, $\hat{c}_j(X_j)/w_j > 0 = c_i(X_i)/w_i$).
Therefore, items can not be completely allocated, leading to a contradiction.
For every agent $i \in [3]$, summing the WEFX inequalities over all $j\neq i$ gives
$$
\begin{aligned}
c_i(M)-c_i(X_i)
=\sum_{j\neq i}c_i(X_j)
&\geq\sum_{j\neq i}\left(\frac{w_j}{w_i}\cdot\hat c_i(X_i)\right)=\frac{1-w_i}{w_i}\cdot\hat c_i(X_i).
\end{aligned}
$$
Equivalently,
$
w_i \cdot c_i(M)\geq w_i \cdot c_i(X_i)+(1-w_i) \cdot \hat c_i(X_i).
$

Let $x_i=|X_i\cap A|$ and $y_i=|X_i\cap B|$.
Substituting the costs and weights into the necessary condition leaves only the following possibilities:
$$
\begin{aligned}
(x_1,y_1)&\in\{(0,1),(1,0),(2,0),(3,0)\},\\
(x_2,y_2)&\in\{(0,1),(0,2),(1,0),(1,1),(1,2),(2,0)\},\\
(x_3,y_3)&\in\{(0,1),(0,2),(1,0),(1,1),(2,0)\}.
\end{aligned}
$$

We claim that agent $1$ cannot receive a chore from $B$.
Suppose for contradiction that agent $1$ receives at least one chore from $B$.
The possibilities above then enforce $(x_1, y_1) = (0, 1)$.
Since $B$ consists of two chores, it follows that $y_2 + y_3 = 1$, which means that exactly one agent $i \in \{2, 3\}$ has $y_i = 1$.
According to the possibilities above, this agent $i$ satisfies $x_i \le 1$.
Meanwhile, the remaining agent $j \in \{2, 3\} \setminus \{i\}$ has $y_j = 0$, which yields $x_j \le 2$.
Combining these bounds with $x_1 = 0$, we get
$$
x_1 + x_2 + x_3 \le 0 + 1 + 2 = 3,
$$
which contradicts the fact that $x_1 + x_2 + x_3 = |A| = 4$.
Thus, we have $y_1=0$ and $y_2+y_3=2$.

Next, we distinguish three cases according to $y_2$.
Each displayed inequality below is the reverse of the corresponding comparison required by WEFX.
\begin{itemize}
    \item Suppose that $y_2=0$ and $y_3=2$.
    The possibilities above give $x_3=0$ and $x_2\in\{1,2\}$.
    If $x_2=1$, then $x_1=3$, $\hat c_1(X_1)=2$, and $c_1(X_2)=1$. 
    Hence, we have
    $$
    \frac{\hat c_1(X_1)}{w_1}
    =8
    >\frac{24}{11}
    =\frac{c_1(X_2)}{w_2}.
    $$
    If $x_2=2$, then $\hat c_2(X_2)=4$ and $c_2(X_3)=2$. Therefore,
    $$
    \frac{\hat c_2(X_2)}{w_2}
    =\frac{96}{11}
    >\frac{48}{7}
    =\frac{c_2(X_3)}{w_3}.
    $$

    \item Suppose that $y_2=y_3=1$.
    We have $x_2,x_3\leq1$ and $x_2+x_3\geq1$.
    If $x_3=1$, then $\hat c_3(X_3)=4$ and $c_3(X_2)\leq5$. Hence,
    $$
    \frac{\hat c_3(X_3)}{w_3}
    =\frac{96}{7}
    >\frac{120}{11}
    \geq\frac{c_3(X_2)}{w_2}.
    $$
    If $x_3=0$, then $x_2=1$, $\hat c_2(X_2)=4$, and $c_2(X_3)=1$. Thus,
    $$
    \frac{\hat c_2(X_2)}{w_2}
    =\frac{96}{11}
    >\frac{24}{7}
    =\frac{c_2(X_3)}{w_3}.
    $$

    \item Suppose that $y_2=2$ and $y_3=0$.
    We have $x_2\leq1$ and $x_3\in\{1,2\}$.
    If $x_3=1$, then $x_1=3-x_2\geq2$, so $\hat c_1(X_1)\geq1$ and $c_1(X_3)=1$. Hence,
    $$
    \frac{\hat c_1(X_1)}{w_1}
    \geq4
    >\frac{24}{7}
    =\frac{c_1(X_3)}{w_3}.
    $$
    If $x_3=2$, then $\hat c_3(X_3)=4$ and $c_3(X_2)\leq6$. Consequently,
    $$
    \frac{\hat c_3(X_3)}{w_3}
    =\frac{96}{7}
    >\frac{144}{11}
    \geq\frac{c_3(X_2)}{w_2}.
    $$
\end{itemize}
Thus, under any allocation, at least one agent violates WEFX, which means that no WEFX allocation exists for this example.
\medskip

Next, we turn our attention to restricted additive instances.
\begin{example}[Non-existence of WEFX for restricted additive instances]\label{example:restricted-nonexistence}
Consider two agents with $w_1=4/9$ and $w_2=5/9$, and $M=\{e_1,e_2,e_3,e_4\}$.
Their cost functions are given in Table~\ref{tab:restricted-nonexistence}.
This instance is restricted additive with $c(e_1)=c(e_2)=1$ and $c(e_3)=c(e_4)=3$, but it admits no WEFX allocation.
\end{example}

\begin{table}[!htbp]
\centering
\caption{Cost functions in the restricted additive counterexample}
\label{tab:restricted-nonexistence}
\upshape
\begin{tabular}{ccccc}
\toprule
& $e_1$ & $e_2$ & $e_3$ & $e_4$ \\
\midrule
Agent 1 & $0$ & $0$ & $3$ & $3$ \\
Agent 2 & $1$ & $1$ & $3$ & $3$ \\
\bottomrule
\end{tabular}
\end{table}

We now prove the claim in Example~\ref{example:restricted-nonexistence}.
Fix an allocation $(X_1, X_2)$ and let $x = |X_1 \cap \{e_1, e_2\}|$.
We analyze three cases based on the number of chores in $\{e_3, e_4\}$ assigned to agent $1$.
As above, each displayed inequality contradicts the corresponding WEFX comparison.
\begin{itemize}
    \item \textbf{Agent $1$ receives neither $e_3$ nor $e_4$.}
    In this case, we have $\hat{c}_2(X_2) \ge 3$ and $c_2(X_1) = x \le 2$.
    From the perspective of agent $2$, we have
    $$
    \frac{\hat{c}_2(X_2)}{w_2} \ge \frac{27}{5} > \frac{9}{2} \ge \frac{c_2(X_1)}{w_1}.
    $$

    \item \textbf{Agent $1$ receives exactly one chore from $\{e_3, e_4\}$.}
    In this case, if $x = 0$, then we have $\hat{c}_2(X_2) = 4$ and $c_2(X_1) = 3$, so
    $$
    \frac{\hat{c}_2(X_2)}{w_2} = \frac{36}{5} > \frac{27}{4} = \frac{c_2(X_1)}{w_1}.
    $$
    If instead $x \ge 1$, then $\hat{c}_1(X_1) = 3$ and $c_1(X_2) = 3$, and hence
    $$
    \frac{\hat{c}_1(X_1)}{w_1} = \frac{27}{4} > \frac{27}{5} = \frac{c_1(X_2)}{w_2}.
    $$

    \item \textbf{Agent $1$ receives both $e_3$ and $e_4$.}
    In this case, we have $\hat{c}_1(X_1) \ge 3$ while $c_1(X_2) = 0$. Therefore,
    $$
    \frac{\hat{c}_1(X_1)}{w_1}\geq\frac{27}{4}>0=\frac{c_1(X_2)}{w_2}.
    $$
\end{itemize}
Hence, no allocation is WEFX.

\section{Cancelable Chores: Ex-ante EF and Ex-post EFX}\label{appendix:cancelable-extension}
In this appendix, we consider cancelable cost functions with binary marginal costs. 
Since Tao et al.~\cite{journals/tcs/TaoWYZ25} showed that EFX and PO are incompatible in this setting, we restrict our focus to ex-ante EF and ex-post EFX.
Throughout this section, for any $S, U \subseteq M$ with $S \cap U = \emptyset$, we define $c_i(U \mid S) = c_i(S \cup U) - c_i(S)$ as the marginal cost of bundle $U$ with respect to $S$. 
For notational simplicity, when $U = \{e\}$ is a singleton, we write $c_i(e \mid S)$ instead of $c_i(\{e\} \mid S)$.

We start with several useful propositions derived from binary cancelable functions.

\begin{proposition}[\cite{journals/tcs/TaoWYZ25}]\label{proposition: cancelable_equality}
    For any $S, Q \subseteq M$ with $c_i(S) = c_i(Q)$, we have $c_i(S \cup U) = c_i(Q \cup U)$ for every set $U \subseteq M$ disjoint from $S \cup Q$.    
\end{proposition}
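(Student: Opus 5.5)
The plan is to prove the statement by induction on $|U|$, adding the chores of $U$ one at a time. Each single step uses only the contrapositive of cancelability. Binarity is not needed, so the argument should work for any cancelable $c_i$.

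\emph{Base case.} For $U=\emptyset$ the claim is just the hypothesis $c_i(S)=c_i(Q)$.

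\emph{Single-chore step.} Let $S,Q$ satisfy $c_i(S)=c_i(Q)$ and let $e\notin S\cup Q$. I will show $c_i(S\cup\{e\})=c_i(Q\cup\{e\})$ by ruling out both strict inequalities.
\begin{itemize}
\item Suppose $c_i(S\cup\{e\})>c_i(Q\cup\{e\})$. Cancelability, applied with $T=Q$ and the chore $e\notin S\cup Q$, gives $c_i(S)>c_i(Q)$. This contradicts the hypothesis.
\item The reverse strict inequality is excluded the same way, with the roles of $S$ and $Q$ exchanged.
\end{itemize}

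\emph{Inductive step.} Write $U=\{e_1,\dots,e_t\}$ and set $S_\ell=S\cup\{e_1,\dots,e_\ell\}$ and $Q_\ell=Q\cup\{e_1,\dots,e_\ell\}$. Since $U$ is disjoint from $S\cup Q$, the chore $e_{\ell+1}$ lies outside $S_\ell\cup Q_\ell$. So the single-chore step applies to the pair $(S_\ell,Q_\ell)$ and upgrades $c_i(S_\ell)=c_i(Q_\ell)$ to $c_i(S_{\ell+1})=c_i(Q_{\ell+1})$. After $t$ steps this yields $c_i(S\cup U)=c_i(Q\cup U)$.

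The only point needing care is this disjointness bookkeeping: at every stage the added chore must lie outside both current sets, so that cancelability is legitimately applicable. This follows directly from $U\cap(S\cup Q)=\emptyset$ and the distinctness of the $e_\ell$. Beyond that I expect no real obstacle.
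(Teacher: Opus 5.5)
Your proof is correct: the single-chore step is the contrapositive of cancelability applied to both ordered pairs $(S,Q)$ and $(Q,S)$, and the induction over $U$ is sound because each $e_{\ell+1}$ lies outside $S_\ell \cup Q_\ell$. Overlap between $S$ and $Q$ is harmless, since the definition allows arbitrary bundles. The paper does not prove this proposition but imports it from Tao et al.; your argument is the natural direct one, and, as you note, it uses cancelability alone without binarity.
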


\begin{proposition}[\cite{journals/tcs/TaoWYZ25}]\label{proposition: submodualrity}
    For any $S \subseteq Q \subseteq M$, if there exists an item $e\notin Q$ with $c_i(e \mid S) = 0$, then $c_i(e \mid Q) = 0$.
    
\end{proposition}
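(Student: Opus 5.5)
The plan is to deduce this directly from Proposition~\ref{proposition: cancelable_equality}, so no induction on $|Q\setminus S|$ should be needed. The hypothesis $c_i(e\mid S)=0$ says exactly that $c_i(S\cup\{e\})=c_i(S)$. That is, the two bundles $S\cup\{e\}$ and $S$ have equal cost for agent $i$. The goal $c_i(e\mid Q)=0$ says $c_i(Q\cup\{e\})=c_i(Q)$. Since $S\subseteq Q$, this is obtained from the previous equality by adding the same set $U=Q\setminus S$ to both sides.

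Concretely, I would apply Proposition~\ref{proposition: cancelable_equality} with the two bundles $S\cup\{e\}$ and $S$ in the roles of $S$ and $Q$ there, and with $U=Q\setminus S$. The one condition to check is that $U$ is disjoint from $(S\cup\{e\})\cup S=S\cup\{e\}$. This holds because $U\cap S=\emptyset$ by definition of $U$, and $e\notin U$ because $e\notin Q$ by hypothesis. The proposition then gives
\[
c_i(S\cup\{e\}\cup U)=c_i(S\cup U).
\]
Since $S\cup U=Q$, this reads $c_i(Q\cup\{e\})=c_i(Q)$, which is $c_i(e\mid Q)=0$.

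If one prefers not to rely on Proposition~\ref{proposition: cancelable_equality} as a black box, its needed instance follows from the cancelable condition. Suppose $c_i(S\cup\{e\}\cup U)\neq c_i(S\cup U)$. Then one side is strictly larger. Remove the items of $U$ one at a time from both bundles; at each step the cancelable implication preserves the strict inequality in the same direction. After all of $U$ is removed we would get $c_i(S\cup\{e\})\neq c_i(S)$, a contradiction.

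The only step needing care is the disjointness check. It is exactly where the assumption $e\notin Q$ is used: it guarantees that $e$ is never one of the items removed when peeling off $U$. Beyond that I expect no real obstacle. Binarity of the marginals is not even needed, only cancelability.
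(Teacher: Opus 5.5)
Your proof is correct. The paper does not prove this proposition itself; it imports it, together with Proposition~\ref{proposition: cancelable_equality}, from Tao et al. There is therefore no in-paper argument to compare against, and your derivation supplies a complete justification.

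The hypothesis is exactly the equality $c_i(S\cup\{e\})=c_i(S)$. Adding $U=Q\setminus S$ to both bundles via Proposition~\ref{proposition: cancelable_equality} is legitimate because $e\notin Q$ makes $U$ disjoint from $S\cup\{e\}$. This gives $c_i(Q\cup\{e\})=c_i(Q)$.

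Your fallback argument from the definition of cancelability is also sound. Peeling off the items of $U$ one at a time preserves the strict inequality in whichever direction it holds, since each removed item lies outside both smaller bundles. This makes the proof independent of the cited equality result.

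Your remark that only cancelability is used is accurate. Binarity enters only in interpretation: with marginals in $\{0,1\}$, persistence of zero marginals is equivalent to $c_i(e\mid Q)\le c_i(e\mid S)$, which explains the ``submodularity'' label. The paper uses this zero-marginal persistence, applied item by item and telescoped, to obtain $c_i(P_i\mid A_i)=0$ in Lemma~\ref{lemma: cancelable_configuration}.
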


To guarantee ex-post EFX, we similarly begin by computing a balanced partial allocation.
We initialize $A_i = \emptyset$ for each $i \in N$ and $R = M$ as the set of remaining items.
At each step, we update $M^+$ as the set of unallocated chores that incur a marginal cost of $1$ for every agent $i$ with respect to their current bundle $A_i$:
$$
    M^+ = \{e \in R : c_i(e \mid A_i) = 1 \text{ for all } i \in N\}.
$$ 
As long as $|M^+| \ge n$, we select $n$ chores among them and assign exactly one to each agent, and then update $R$.
Let $s = |A_i|$ denote the number of completed rounds (which is identical for all agents $i$).
After computing the balanced partial allocation, we update $M^+$ and define $M^0 = R \setminus M^+$.
Specifically, we let $r = |M^+| < n$.

Now we characterize the instance following this balanced partition.
\begin{lemma}\label{lemma: cancelable_residual}
    The balanced partial allocation can be computed in polynomial time.
    Moreover, $c_i(A_j) = s$ for all $i, j \in N$.
\end{lemma}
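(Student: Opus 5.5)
We prove the two claims separately, and begin with the running time. In each round the procedure recomputes $M^+$. This means evaluating $c_i(e\mid A_i)$ for every agent $i$ and every remaining item $e\in R$, which takes $O(nm)$ oracle queries. If $|M^+|\ge n$, the round then removes $n$ items from $R$. Hence there are at most $m/n+1$ rounds, and the total work is polynomial in $n$ and $m$, assuming value-oracle access to each $c_i$.

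For the cost claim, we argue by induction on the number of completed rounds $t$ and show that $c_i(A_j^{(t)})=t$ for all $i,j\in N$, where $A_j^{(t)}$ is agent $j$'s bundle after $t$ rounds. The base case $t=0$ is $c_i(\emptyset)=0$. For the inductive step, suppose round $t+1$ gives item $e_j$ to agent $j$, with $e_j\in M^+$ computed with respect to the bundles after round $t$. By definition of $M^+$, $c_j(e_j\mid A_j^{(t)})=1$. The difficulty is that we also need $c_i(e_j\mid A_j^{(t)})=1$ for $i\neq j$, while $M^+$ only tells us $c_i(e_j\mid A_i^{(t)})=1$, a marginal measured against agent $i$'s own bundle.

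This is where Proposition~\ref{proposition: cancelable_equality} does the work. By the induction hypothesis, $c_i(A_i^{(t)})=t=c_i(A_j^{(t)})$. The bundles $A_i^{(t)}$ and $A_j^{(t)}$ are disjoint when $i\neq j$, and $e_j$ lies in neither, so we may take $U=\{e_j\}$ in the proposition with $S=A_i^{(t)}$ and $Q=A_j^{(t)}$. This gives $c_i(A_j^{(t)}\cup\{e_j\})=c_i(A_i^{(t)}\cup\{e_j\})=t+c_i(e_j\mid A_i^{(t)})=t+1$. Therefore $c_i(A_j^{(t+1)})=t+1$ for every $i$, including $i=j$, which completes the induction. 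Taking $t=s$ gives $c_i(A_j)=s$ for all $i,j\in N$.

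The main subtlety is checking the disjointness hypothesis of Proposition~\ref{proposition: cancelable_equality}: that $U=\{e_j\}$ is disjoint from $S\cup Q$. This holds because $e_j$ is still in $R$ when it is selected, so it belongs to no bundle. The case $i=j$ needs no proposition at all, since it follows directly from the definition of $M^+$.
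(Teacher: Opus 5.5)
Your proposal is correct and follows the paper's proof essentially step for step. You use induction on the number of rounds and apply Proposition~\ref{proposition: cancelable_equality} with $S=A_i$, $Q=A_j$, $U=\{e_j\}$ to transfer the unit marginal cost from agent $i$'s own bundle to $A_j$; the running-time bound comes, as in the paper, from each round removing $n$ chores.
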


\begin{proof}
We prove by induction that, after $t$ completed rounds,
$$
c_i(A_j)=t \quad \text{for all } i,j\in N.
$$
The claim is immediate for $t=0$ as every $A_j$ is empty. 
Suppose that it holds after $t$ rounds, and let $e_j$ be the chore assigned to agent $j$ in the $(t+1)$-th round. Since $e_j\in M^+$, we have $c_i(e_j\mid A_i)=1$ for every agent $i$. The induction hypothesis gives $c_i(A_i)=c_i(A_j)=t$, so Proposition~\ref{proposition: cancelable_equality} implies
$$
c_i(A_j\cup\{e_j\})=c_i(A_i\cup\{e_j\})=t+1.
$$
Thus, the claim also holds after the $(t+1)$-th round. Upon termination after $s$ rounds, we obtain $c_i(A_j)=s$ for all $i,j\in N$.
Since each round removes $n$ chores from the remaining pool and identifying $M^+$ only requires checking all agents over unallocated chores, the phase completes in polynomial time.
\end{proof}

We then refer to the instance resulting from the balanced partition as the \emph{residual instance}.
Below, we extend the concept of feasible configurations to the residual instance.

\begin{definition}[Feasible Configuration in Residual Instance]
    In the residual instance, a configuration $(\bP, N_H)$ is \emph{feasible} if:
    \begin{itemize}
        \item $\bP=(P_1, \dots, P_n)$ is a partition of $M^0$ such that $c_i(P_i \mid A_i) = 0$ for every $i \in N$;
        \item $N_H \subseteq N$ is the set of high agents with $|N_H| = r$;
        \item $c_i(P_j \mid A_i) \geq 1$ for every agent $i \in N_H$ with $P_i \neq \emptyset$ and every agent $j \notin N_H$.
    \end{itemize}
\end{definition}

Similarly, we can efficiently compute such a feasible configuration for the residual instance.

\begin{lemma}\label{lemma: cancelable_configuration}
A feasible configuration for the residual instance can be found in polynomial time.
\end{lemma}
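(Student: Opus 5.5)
I would mirror the proof of Lemma~\ref{lemma: feasible_configuration}, replacing every cost by the marginal cost with respect to the agent's balanced bundle $A_i$.

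\textbf{Step 1 (greedy partition).} Fix an arbitrary ordering of the agents. Process the chores of $M^0$ one at a time and keep current bundles $P_i$, all initially empty. For each chore $e$, give it to the first agent $i$ in the ordering with $c_i(e\mid A_i\cup P_i)=0$. Such an agent must exist. Since $e\notin M^+$, some agent $i$ has $c_i(e\mid A_i)=0$. Because $A_i\subseteq A_i\cup P_i$, Proposition~\ref{proposition: submodualrity} then gives $c_i(e\mid A_i\cup P_i)=0$. Every step adds zero marginal cost, so $c_i(P_i\mid A_i)=0$ holds throughout.

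\textbf{Step 2 (choosing the high agents).} Let $N_1=\{i:P_i=\emptyset\}$, exactly as before. If $|N_1|\ge r$, take $N_H$ to be any $r$ agents of $N_1$; the third condition is then vacuous. Otherwise, put all of $N_1$ into $N_H$, followed by the first $r-|N_1|$ agents in the ordering whose bundles are nonempty. Take $i\in N_H$ with $P_i\ne\emptyset$ and $j\notin N_H$. Then $P_j\neq\emptyset$ and $i$ precedes $j$ in the ordering. It remains to show $c_i(P_j\mid A_i)\ge1$.

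\textbf{Step 3 (the main obstacle).} In the additive case this was immediate, since each chore of $P_j$ individually costs $i$ one unit. Here I would argue via the first chore $e$ that entered $P_j$. When $e$ was processed, $i$ was skipped, so $c_i(e\mid A_i\cup P_i^{(t)})=1$, where $P_i^{(t)}\subseteq P_i$ is $i$'s bundle at that moment. By the contrapositive of Proposition~\ref{proposition: submodualrity} applied with $S=A_i$ and $Q=A_i\cup P_i^{(t)}$, we get $c_i(e\mid A_i)=1$. Binary marginals are nonnegative, so $c_i$ is monotone, and hence $c_i(P_j\mid A_i)\ge c_i(\{e\}\mid A_i)=1$. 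This uses monotonicity, which holds because all marginals lie in $\{0,1\}$.

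\textbf{Step 4 (running time).} Each chore requires at most $n$ marginal-cost evaluations, and choosing $N_H$ is linear in $n$. The whole construction therefore runs in polynomial time, assuming oracle access to the cost functions.
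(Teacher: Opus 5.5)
Your proof is correct and follows essentially the paper's argument: greedy assignment of $M^0$ along a fixed ordering, the same choice of $N_H$, and a single chore $e\in P_j$ with $c_i(e\mid A_i)=1$ plus monotonicity for the third condition. The only cosmetic difference is that you test $c_i(e\mid A_i\cup P_i)=0$ rather than the paper's $c_i(e\mid A_i)=0$; the two tests agree here by Proposition~\ref{proposition: cancelable_equality}, since $c_i(P_i\mid A_i)=0$ is maintained. As a result, you use Proposition~\ref{proposition: submodualrity} for existence and, via its contrapositive, in Step 3, whereas the paper uses it to obtain $c_i(P_i\mid A_i)=0$ and reads off $c_i(e\mid A_i)=1$ directly.
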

\begin{proof}
    Fix an ordering of the agents, say $1, 2, \ldots, n$.
    We assign each $e\in M^0$ to the first agent $i$ satisfying $c_i(e \mid A_i)=0$, whose existence follows from the definition of $M^0$.
    The resulting partition of $M^0$ is set as $\bP$.
    For each agent $i$, $c_i(P_i \mid A_i) = 0$ naturally follows from Proposition~\ref{proposition: submodualrity}.

    Now we consider the identification of high agents $N_H$.
    If there are at least $r$ agents receiving nothing from $M^0$, then we arbitrarily choose any $r$ agents as $N_H$.
    Otherwise, we set all agents receiving nothing from $M^0$ as high agents.
    We further fill up $N_H$ until $|N_H|=r$, by including the remaining agents following the index order.
    For $i\in N_H$ with $P_i\neq\emptyset$ and $j\notin N_H$, we have $i < j$. 
    Every $e \in P_j$ therefore satisfies $c_i(e \mid A_i)=1$, and monotonicity gives $c_i(P_j \mid A_i) \geq1$.
    The construction uses polynomially many value queries.
\end{proof}

Note that an allocation induced by $(\bP, N_H)$ assigns one distinct chore from $M^+$ to every high agent and none to a low agent, and gives every agent $i \in N$ the resulting residual bundle $P_i$ together with $A_i$.

\begin{lemma}\label{lemma: expost_EFX_cancelable}
Every allocation induced by a feasible configuration is EFX.
\end{lemma}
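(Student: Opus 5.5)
The plan is to compute, for every agent $i$, an upper bound on $c_i(X_i)$ and, for every $j \neq i$, a lower bound on $c_i(X_j)$. Here the induced allocation is $X_i = A_i \cup P_i \cup E_i$, where $E_i=\{e_i\}$ is the single chore of $M^+$ given to $i$ if $i\in N_H$, and $E_i=\emptyset$ otherwise. The tools are Lemma~\ref{lemma: cancelable_residual} (namely $c_i(A_j)=s$ for all $i,j$), Proposition~\ref{proposition: cancelable_equality}, and two facts that follow from binary marginals: monotonicity, and $c_i(T)\le |T|$ for every bundle $T$.

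\emph{Own costs.} A low agent has $c_i(X_i)=c_i(A_i)+c_i(P_i\mid A_i)=s$. A high agent with $P_i=\emptyset$ has $X_i=A_i\cup\{e_i\}$ with $e_i\in M^+$, so $c_i(X_i)=s+1$. A high agent with $P_i\neq\emptyset$ has $c_i(A_i\cup P_i)=s$, and adding $e_i$ raises the cost by at most $1$, so $c_i(X_i)\le s+1$.

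\emph{Costs of other bundles.} By monotonicity, $c_i(X_j)\ge c_i(A_j)=s$ for all $i,j$. If $j\in N_H$, then $c_i(A_i)=c_i(A_j)=s$, and $e_j$ lies in neither $A_i$ nor $A_j$. Proposition~\ref{proposition: cancelable_equality} therefore gives
\[
c_i(A_j\cup\{e_j\})=c_i(A_i\cup\{e_j\})=s+1,
\]
the last equality because $e_j\in M^+$ has marginal cost $1$ with respect to $A_i$. Monotonicity then yields $c_i(X_j)\ge s+1$.

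\emph{Case analysis.} Together, these bounds show that any EFX violation needs $c_i(X_i)=s+1$ and $c_i(X_j)=s$, so $i\in N_H$ and $j\notin N_H$. There are two subcases.
\begin{itemize}
\item If $P_i=\emptyset$, then $X_i$ has exactly $s+1$ items. Removing any one item leaves a bundle of $s$ items, whose cost is at most $s$ by $c_i(T)\le|T|$. Hence $\hat c_i(X_i)\le s\le c_i(X_j)$, and there is no violation.
\item If $P_i\ne\emptyset$, I would again apply Proposition~\ref{proposition: cancelable_equality}, now with $U=P_j$, which is disjoint from $A_i\cup A_j$. This gives
\[
c_i(A_j\cup P_j)=c_i(A_i\cup P_j)=s+c_i(P_j\mid A_i)\ge s+1,
\]
using the third condition of a feasible configuration. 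So $c_i(X_j)\ge s+1\ge c_i(X_i)$, and $i$ does not even envy $j$.
\end{itemize}
This rules out every possible violation, so the allocation is EFX.

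\emph{Main obstacle.} The delicate step is transferring cost comparisons between bundles built on different bases $A_i$ and $A_j$, since the costs are not additive. Everything rests on applying Proposition~\ref{proposition: cancelable_equality} with the right disjoint $U$: namely $\{e_j\}$ in one place and $P_j$ in the other, with disjointness holding because $A$, $P$ and $M^+$ are pairwise disjoint parts of $M$.
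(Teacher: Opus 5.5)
Your proof is correct and follows essentially the same route as the paper. Both arguments use the own-cost bounds $s$ and $s+1$, the lower bound $c_i(X_j)\ge s$ from Lemma~\ref{lemma: cancelable_residual}, and the item-count argument when $P_i=\emptyset$. Both also apply Proposition~\ref{proposition: cancelable_equality} with $U=P_j$ to move $P_j$ from base $A_j$ to base $A_i$ when $P_i\neq\emptyset$. Your explicit use of the same proposition with $U=\{e_j\}$, giving $c_i(X_j)\ge s+1$ for high $j$, fills in a step that the paper's proof of this lemma only asserts; the paper justifies it later, in the ex-ante EF argument.
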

\begin{proof}
    Fix an induced allocation $\bX$. 
    For any agent $i\notin N_H$ and any $j\in N$, it holds that
    $$
        c_i(X_i) = c_i(P_i \cup A_i) = c_i(P_i \mid A_i) + c_i(A_i) = s \le c_i(X_j),
    $$
    where the last inequality follows from the monotonicity and Lemma~\ref{lemma: cancelable_residual}.
    
    For any agent $i\in N_H$ with $P_i = \emptyset$, we have $c_i(X_i) = c_i(A_i \cup \{e\}) = s+1$, where $e\in M^+$ is the distinct item assigned to the high agent.
    Since there are exactly $s+1$ items in $X_i$, by the binary marginality, we have $\hat c_i(X_i) = s \le c_i(X_j)$ for all $j\in N$.
    It remains to consider an agent $i\in N_H$ with $P_i \neq \emptyset$.
    For such an agent $i$, we have $c_i(X_i) \le c_i(P_i \cup A_i) + 1 = c_i(P_i \mid A_i) + c_i(A_i) + 1 = s+1$.
    For any $j\notin N_H$, it holds 
    $$
        c_i(X_j) = c_i(P_j \cup A_j) = c_i(P_j \cup A_i) = c_i(P_j \mid A_i) + c_i(A_i) \ge s + 1 = c_i(X_i),
    $$
    where the first inequality follows from the construction of $N_H$.
    Finally, for any $j \in N_H$, we have $c_i(X_i) \le s+1 \le c_i(X_j)$.
    Consequently, the induced allocation $\bX$ is EFX.
\end{proof}

We now implement the lottery in polynomial time. 
Fix a feasible configuration $(\bP,N_H)$ returned by Lemma~\ref{lemma: cancelable_configuration}, and let $N_H^+=\{i\in N_H:P_i\neq\emptyset\}$, $q=r-|N_H^+|$, and $N'=N\setminus N_H^+$. 
As in Section~\ref{subsection: matching_reallocation}, construct a bipartite graph whose left side is $N'$ and whose right side contains a bundle vertex $P_k$ for each $k\notin N_H$ and $q$ high slots.
Agent $i\in N'$ is adjacent to $P_k$ if $c_i(P_k \mid A_i)=0$ and to every high slot.

For $i\in N'$, $k\notin N_H$, and $h\in[q]$, use $x_{i,k}$ and $y_{i,h}$ as in Section~\ref{subsection: matching_lp}.
Similarly, consider the following LP:
$$
\begin{aligned}
\sum_{k\notin N_H}x_{i,k}+\sum_{h=1}^qy_{i,h}&=1 &&\text{for }i\in N',\\
\sum_{i\in N'}x_{i,k}&=1 &&\text{for }k\notin N_H,\\
\sum_{i\in N'}y_{i,h}&=1 &&\text{for }h\in[q],\\
x_{i,k}&=0 &&\text{if }c_i(P_k \mid A_i)>0,\\
\sum_{h=1}^qy_{i,h}&\leq\sum_{h=1}^qy_{j,h}+\sum_{k\notin N_H}\left ( c_i(P_k \mid A_i) \cdot x_{j,k} \right ) &&\text{for }i,j\in N',\\
x_{i,k},y_{i,h}&\geq0.&&
\end{aligned}
$$

We next establish the feasibility of the LP and show that it can be solved in polynomial time.
\begin{lemma}\label{lemma: cancelable_lp}
The linear program is feasible, and a feasible solution can be computed in polynomial time.
\end{lemma}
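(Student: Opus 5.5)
We mirror the proof of Lemma~\ref{lemma: matching_lp} in the residual instance. First, we check that the bipartite graph has a perfect matching. Match each bundle vertex $P_k$ with $k\notin N_H$ to agent $k$; this edge is present because $c_k(P_k\mid A_k)=0$ by feasibility of $(\bP,N_H)$. Then match the $q$ high slots to the agents in $N_H\setminus N_H^+$. Hence the set $\mathcal{T}$ of perfect matchings is nonempty. Take $(x,y)$ to be the expectation of the uniform distribution over $\mathcal{T}$. The three equality constraints and the nonnegativity constraints hold because every $T\in\mathcal{T}$ is a perfect matching. The constraint $x_{i,k}=0$ whenever $c_i(P_k\mid A_i)>0$ holds because such pairs are not edges of the graph.

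For the fairness constraint, fix $i,j\in N'$ and define
\[
\mathcal{A}=\{T\in\mathcal{T}: i \text{ is matched to a high slot}\},
\]
\[
\mathcal{B}=\{T\in\mathcal{T}: j \text{ is matched to a high slot or to some } P_k \text{ with } c_i(P_k\mid A_i)\ge 1\}.
\]
Define $\varphi:\mathcal{A}\to\mathcal{B}$ as follows.
\begin{itemize}
\item On $\mathcal{A}\cap\mathcal{B}$, let $\varphi$ be the identity.
\item For $T\in\mathcal{A}\setminus\mathcal{B}$, agent $j$ is matched to some $P_k$ with $c_i(P_k\mid A_i)=0$ (marginals are integers, so this is the only alternative). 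Swap the partners of $i$ and $j$.
\end{itemize}
The swapped matching is valid. The edge $(i,P_k)$ exists precisely because $c_i(P_k\mid A_i)=0$, and $j$ is adjacent to every high slot. The result lies in $\mathcal{B}$ because $j$ now holds a high slot.

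Injectivity follows as before. The swap is reversible, and the images of the two parts are disjoint: $i$ holds a high slot in images of $\mathcal{A}\cap\mathcal{B}$ but a bundle vertex in images of $\mathcal{A}\setminus\mathcal{B}$. Therefore $|\mathcal{A}|\le|\mathcal{B}|$. Dividing by $|\mathcal{T}|$ gives
\[
\sum_h y_{i,h}\le \sum_h y_{j,h}+\sum_{k\notin N_H,\ c_i(P_k\mid A_i)\ge1}x_{j,k}.
\]
Since $c_i(P_k\mid A_i)\ge 1$ on the indices of the last sum and $x\ge 0$, the right-hand side is at most $\sum_h y_{j,h}+\sum_{k\notin N_H} c_i(P_k\mid A_i)\,x_{j,k}$. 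This is the required fairness constraint, so the LP is feasible.

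For computation, the LP has $O(n^2)$ variables and constraints. Its coefficients $c_i(P_k\mid A_i)$ are integers in $[0,m]$, and each is obtained with two value queries. A feasible solution can therefore be found in polynomial time by the ellipsoid or an interior-point method; the uniform distribution is used only as an existence witness.

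The only point needing care beyond the additive case is the validity of the swapped matching. It relies on the edge set being defined by exactly the same marginal quantity $c_i(P_k\mid A_i)$ that appears in $\mathcal{B}$ and in the LP. With that alignment the argument involves no cancelability-specific reasoning, so I expect no real obstacle.
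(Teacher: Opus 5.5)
Your proposal is correct and follows essentially the same route as the paper's proof. Both arguments exhibit the canonical perfect matching (each $P_k$ to agent $k$, the high slots to $N_H\setminus N_H^+$). Both take the uniform distribution over perfect matchings as the feasibility witness. Both use the swap-based injection $\varphi:\mathcal{A}\to\mathcal{B}$, which relies on the edge set and $\mathcal{B}$ being defined by the same quantity $c_i(P_k\mid A_i)$, and both get polynomial-time solvability from the $O(n^2)$ size and the integer coefficients bounded by $m$.

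Your two side remarks make explicit what the paper leaves implicit: that integrality of the marginals forces $c_i(P_k\mid A_i)=0$ in the swap case, and that the uniform distribution serves only as an existence witness rather than as the computed solution.
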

\begin{proof}
Note that the graph always contains a perfect matching obtained by matching each $P_k$ to agent $k$, and the high slots to the agents in $N_H \setminus N_H^+$ arbitrarily.

Let $\mathcal{T}$ be the set of all perfect matchings in the graph. To prove LP feasibility, take the uniform distribution over $\mathcal{T}$. For fixed $i,j\in N'$, let $\mathcal{A}$ contain the matchings in which $i$ receives a high slot, and let $\mathcal{B}$ contain those in which $j$ receives a high slot or some $P_k$ with $c_i(P_k \mid A_i)\geq1$. 

In the following, we construct an injection $\varphi:\mathcal{A}\to\mathcal{B}$. For every $T\in\mathcal{A}\cap\mathcal{B}$, define $\varphi(T)=T$. Now consider any $T\in\mathcal{A}\setminus\mathcal{B}$. Agent $i$ is matched to a high slot, while agent $j$ is matched to some bundle vertex $P_k$ satisfying $c_i(P_k\mid A_i)=0$. 
In this case, we define $\varphi(T)$ as the matching obtained by swapping the assignments of $i$ and $j$. The edge $(i,P_k)$ belongs to the graph by the definition of bundle edges, and agent $j$ is adjacent to every high slot. Thus, $\varphi(T)$ is a valid perfect matching in which $j$ receives a high slot, which implies $\varphi(T)\in\mathcal{B}$.
On $\mathcal{A}\setminus\mathcal{B}$, the swap is reversible and therefore uniquely determines the original matching $T$. Moreover, the image of every matching in $\mathcal{A}\cap\mathcal{B}$ matches $i$ to a high slot, whereas the image of every matching in $\mathcal{A}\setminus\mathcal{B}$ matches $i$ to a bundle vertex. Hence, the images of these two sets are disjoint, proving that $\varphi$ is injective and consequently $|\mathcal{A}|\leq|\mathcal{B}|$.

Let $x$ and $y$ be the average incidence vectors under the uniform distribution over $\mathcal{T}$. Since every matching has probability $1/|\mathcal{T}|$, we have
$$
\begin{aligned}
\sum_{h=1}^qy_{i,h}
&=\frac{|\mathcal{A}|}{|\mathcal{T}|}
\leq\frac{|\mathcal{B}|}{|\mathcal{T}|}=\sum_{h=1}^qy_{j,h}
+\sum_{\substack{k\notin N_H\\c_i(P_k\mid A_i)\geq1}}x_{j,k}\leq\sum_{h=1}^qy_{j,h}
+\sum_{k\notin N_H}\left ( c_i(P_k \mid A_i) \cdot x_{j,k} \right ).              
\end{aligned}
$$
This proves the second-to-last constraint.
All remaining constraints follow directly because $(x, y)$ is a convex combination of perfect matchings.
Since the LP has $O(n^2)$ variables and constraints, and each coefficient $c_i(P_k \mid A_i)$ is an integer bounded between $0$ and $m$, the LP can be solved in polynomial time.
\end{proof}

\begin{lemma}\label{lemma: cancelable_lottery}
Every feasible solution of the linear program can be decomposed in polynomial time into a convex combination of at most $n^2$ perfect matchings in the graph.
\end{lemma}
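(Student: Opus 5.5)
The argument follows the proof of Lemma~\ref{lemma: matching_decomposition} almost verbatim, with the cancelable edge condition $c_i(P_k \mid A_i)=0$ in place of the additive condition $c_i(P_k)=0$. Fix any feasible solution $(x,y)$ of the linear program, computed, for instance, by Lemma~\ref{lemma: cancelable_lp}.

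First, we arrange the variables into an $|N'|\times|N'|$ matrix. Its rows are indexed by the agents in $N'$, and its columns by the $n-r$ bundle vertices $P_k$ ($k\notin N_H$) together with the $q$ high slots. The matrix is square because $|N'| = n-|N_H^+| = (n-r)+(r-|N_H^+|) = (n-r)+q$. The entry in row $i$ is $x_{i,k}$ in the column of $P_k$ and $y_{i,h}$ in the column of slot $h$.

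Next, we check that this matrix is doubly stochastic. The first group of constraints gives unit row sums, the second and third groups give unit column sums, and nonnegativity is imposed explicitly. Lemma~\ref{lemma: birkhoff} then yields, in polynomial time, a decomposition $(x,y)=\sum_{\ell=1}^d\lambda_\ell T_\ell$ with $\lambda_\ell>0$, $\sum_\ell\lambda_\ell=1$, $d\le |N'|^2\le n^2$, and each $T_\ell$ a permutation matrix.

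The remaining step is to show that each $T_\ell$ is a perfect matching in the graph, and not just any permutation. Since all $\lambda_\ell$ and all entries are nonnegative, an entry equal to one in $T_\ell$ forces the corresponding entry of $(x,y)$ to be at least $\lambda_\ell>0$. A positive $y_{i,h}$ corresponds to an edge, because every agent in $N'$ is adjacent to every high slot. A positive $x_{i,k}$ corresponds to an edge because the fourth group of constraints forces $x_{i,k}=0$ whenever $c_i(P_k\mid A_i)>0$, and $(i,P_k)$ is an edge exactly when $c_i(P_k\mid A_i)=0$. So every $T_\ell$ is a perfect matching of the graph.

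I do not expect a real obstacle. The only points needing care are the dimension count above and the positivity argument ensuring that permutations use only graph edges.
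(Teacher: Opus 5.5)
Your proposal is correct and follows essentially the same route as the paper: assemble $(x,y)$ into a doubly stochastic $|N'|\times|N'|$ matrix, apply the Birkhoff--von Neumann decomposition, and use $\lambda_\ell>0$ together with the constraint $x_{i,k}=0$ whenever $c_i(P_k\mid A_i)>0$ to conclude that each permutation matrix uses only graph edges. Your explicit check that the matrix is square, $|N'|=(n-r)+q$, is a detail the paper leaves implicit.
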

\begin{proof}
Fix any feasible solution $(x,y)$ of the LP.
Combine $x$ and $y$ into a matrix whose rows are indexed by $N'$ and whose columns are indexed by the bundle vertices and high slots. The first three groups of constraints imply that this matrix is doubly stochastic. By Lemma~\ref{lemma: birkhoff}, it admits a polynomial-time decomposition
$$
(x,y)=\sum_{\ell=1}^d \left(\lambda_\ell\cdot T_\ell\right)
\quad\text{with}\quad
\sum_{\ell=1}^d\lambda_\ell=1,
$$
where $d\leq |N'|^2\leq n^2$, each $\lambda_\ell>0$, and every $T_\ell$ is a permutation matrix. 
Since each $\lambda_\ell>0$, every $T_\ell$ is supported on positive entries of $(x,y)$, all of which correspond to edges of the graph, and therefore every permutation matrix $T_\ell$ represents a perfect matching.
\end{proof}

\begin{algorithm}[!htbp]
\caption{Polynomial Cancelable Feasible Configuration Lottery}
\label{alg: Cancelable_EFX}
Initialize $A_i\gets\emptyset$ for every $i\in N$ and $R\gets M$\;
\tcp{Balance Phase}
Compute $M^+=\{e\in R:c_i(e\mid A_i)=1\text{ for every }i\in N\}$\;
\While{$|M^+|\geq n$}{
Choose any $S\subseteq M^+$ with $|S|=n$, and assign its chores bijectively to the agents\;
Add the assigned chore to each $A_i$, set $R\gets R\setminus S$, and recompute $M^+$\;
}
\tcp{Residual Instance}
Let $s=|A_i|$, $M^0=R\setminus M^+$, and $r=|M^+|$\;
Compute a residual feasible configuration $(\bP,N_H)$ using Lemma~\ref{lemma: cancelable_configuration}\;
Construct and solve the residual LP using Lemma~\ref{lemma: cancelable_lp}, and decompose its solution using Lemma~\ref{lemma: cancelable_lottery}\;
Select a matching according to the decomposition and construct its induced configuration $(\tilde \bP,\tilde N_H)$\;
Set $X_i\gets A_i\cup \tilde P_i$ for every $i$, and add one distinct chore from $M^+$ to $X_i$ for every $i\in \tilde N_H$\;
\KwOut{A lottery that is ex-ante EF and ex-post EFX}
\end{algorithm}

Now we are ready to prove Theorem~\ref{theorem: cancelable}.
By Lemmas~\ref{lemma: cancelable_residual} and~\ref{lemma: cancelable_configuration}, both the balanced phase and a feasible configuration for the residual instance can be computed in polynomial time.
Lemmas~\ref{lemma: cancelable_lp} and~\ref{lemma: cancelable_lottery} then efficiently construct a lottery over perfect matchings.
Therefore, the overall algorithm runs in polynomial time.
Since every perfect matching induces a residual feasible configuration, ex-post EFX follows from Lemma~\ref{lemma: expost_EFX_cancelable}.

It remains to verify ex-ante EF for the allocations induced by these matchings. First, fix $i\in N_H^+$. In every realization, agent $i$ keeps $P_i$ and receives some $e_i\in M^+$, so $X_i=A_i\cup P_i\cup\{e_i\}$. Since $c_i(P_i\mid A_i)=0$ and $c_i(e_i\mid A_i)=1$, Proposition~\ref{proposition: cancelable_equality} gives $c_i(X_i)=s+1$. Consider any agent $j$. If $j$ is high in the realized allocation, then $X_j$ contains $A_j\cup\{e_j\}$ for some $e_j\in M^+$. Lemma~\ref{lemma: cancelable_residual} and Proposition~\ref{proposition: cancelable_equality} give
$$
c_i(A_j\cup\{e_j\})=c_i(A_i\cup\{e_j\})=s+1,
$$
and hence $c_i(X_j)\geq s+1$ by monotonicity. If $j$ is low, then $X_j=A_j\cup P_k$ for some $k\notin N_H$. By feasibility of $(\bP,N_H)$ and Proposition~\ref{proposition: cancelable_equality},
$$
c_i(X_j)=s+c_i(P_k\mid A_j)=s+c_i(P_k\mid A_i)\geq s+1.
$$
Thus, $c_i(X_i)\leq c_i(X_j)$ in every realization.

Next, fix $i\in N'$ and $j\in N_H^+$. If $i$ is matched to a bundle vertex $P_k$, the edge definition gives $c_i(X_i)=s+c_i(P_k\mid A_i)=s$. If $i$ is matched to a high slot, then $X_i=A_i\cup\{e_i\}$ for some $e_i\in M^+$ and $c_i(X_i)=s+1$. Therefore, $c_i(X_i)\leq s+1$. On the other hand, $X_j$ contains $A_j\cup\{e_j\}$ for some $e_j\in M^+$. Lemma~\ref{lemma: cancelable_residual}, Proposition~\ref{proposition: cancelable_equality}, and monotonicity imply
$$
c_i(X_j)\geq c_i(A_j\cup\{e_j\})=c_i(A_i\cup\{e_j\})=s+1.
$$
Hence, $c_i(X_i)\leq c_i(X_j)$ in every realization.

Finally, for $i,j\in N'$, we have
$$
\mathbb E[c_i(X_i)]=s+\sum_{h=1}^qy_{i,h}.
$$
On the other hand, we have
$$
\mathbb E[c_i(X_j)]=s+\sum_{h=1}^qy_{j,h}+\sum_{k\notin N_H}\left ( c_i(P_k \mid A_j) \cdot x_{j,k} \right )
=s+\sum_{h=1}^qy_{j,h}+\sum_{k\notin N_H}\left ( c_i(P_k \mid A_i) \cdot x_{j,k} \right ),
$$
where the last equality follows from Lemma~\ref{lemma: cancelable_residual} and Proposition~\ref{proposition: cancelable_equality}. 
Therefore, the second-to-last constraint of the LP therefore gives $\mathbb E[c_i(X_i)]\leq\mathbb E[c_i(X_j)]$, which completes the proof.



\end{document}